\documentclass[11pt]{article}

\usepackage[a4paper,margin=1in]{geometry}

\usepackage[utf8]{inputenc}
\usepackage{authblk}
\usepackage[authoryear]{natbib}

\usepackage{amsmath,amssymb,amsthm}

\usepackage{graphicx}
\usepackage[small]{caption}
\usepackage{booktabs}
\usepackage{multirow}

\usepackage{algorithm}
\usepackage[noend]{algpseudocode}

\usepackage[hidelinks]{hyperref}
\usepackage{url}
\usepackage{xspace}
\usepackage{tikz}
\usepackage{tikzscale}
\usepackage{soul}

\newcommand{\NP}{\textsc{NP}}
\newcommand{\TMB}{\textsc{TMB}\xspace}
\newcommand{\G}{\ensuremath{\mathcal{G}}\xspace}
\newcommand{\T}{\ensuremath{\mathcal{T}}\xspace}
\DeclareMathOperator{\D}{\mathcal{D}\xspace}
\DeclareMathOperator{\tr}{tr}
\DeclareMathOperator{\ta}{at}
\DeclareMathOperator{\td}{dt}
\DeclareMathOperator{\ttr}{tt}
\DeclareMathOperator{\EA}{EA}
\DeclareMathOperator{\LD}{LD}
\DeclareMathOperator{\FT}{FT}
\DeclareMathOperator{\ST}{ST}
\DeclareMathOperator{\MH}{MH}
\DeclareMathOperator{\MW}{MW}
\DeclareMathOperator{\dur}{dur}
\DeclareMathOperator{\wait}{wt}
\newcommand{\RF}{\textsc{ReachFast}\xspace}

\newtheorem{theorem}{Theorem}

\newtheorem{lemma}{Lemma}

\newtheorem{definition}{Definition}
\newtheorem{corollary}{Corollary}

\newtheorem{claim}{Claim}

\title{Optimizing Distances for Multi-Broadcast in Temporal Graphs}

\author[1]{Daniele Carnevale\thanks{Work done while the author was a PhD student at the Gran Sasso Science Institute (GSSI).}}
\author[2]{Gianlorenzo D'Angelo}

\affil[1]{University of Geneva\\ \texttt{daniele.carnevale@unige.ch}}
\affil[2]{Gran Sasso Science Institute (GSSI)\\ \texttt{gianlorenzo.dangelo@gssi.it}}

\date{}

\begin{document}
\maketitle

\begin{abstract}
    Temporal graphs represent networks in which connections change over time, with edges available only at specific moments. 
    Motivated by applications in logistics, multi-agent information spreading, and wireless networks, we introduce the \emph{$\D$-Temporal Multi-Broadcast} ($\D$-TMB) problem, which asks for scheduling the availability of edges so that a predetermined subset of sources reach all other vertices while optimizing the worst-case temporal distance $\D$ from any source. We show that $\D$-TMB generalizes REACHFAST \cite{reachfast}.
    We then characterize the computational complexity and approximability of $\D$-TMB under six definitions of temporal distance $\D$, namely Earliest-Arrival ($\EA$), Latest-Departure ($\LD$), Fastest-Time ($\FT$), Shortest-Traveling ($\ST$), Minimum-Hop ($\MH$), and Minimum-Waiting ($\MW$).    
    For a single source, we show that $\D$-\textsc{TMB} can be solved in polynomial time for $\EA$ and $\LD$, while, for the other temporal distances, it is \NP-hard and hard to approximate within a factor that depends on the adopted distance function. We provide a matching approximation algorithm for $\FT$ and $\MW$.
    For multiple sources, if feasibility is not assumed a priori, the problem is inapproximable within any factor, unless $\textsc{P}=\NP$, already for two sources. We complement this negative result by identifying structural conditions that guarantee tractability for $\EA$ and $\LD$ for any number of sources.
\end{abstract}

\section{Introduction}\label{sec:intro}
Temporal graphs provide a natural model for dynamic systems where interactions occur at specific times, such as communication networks, transportation systems, or epidemiological contact networks~\cite{holme2012Temporal,KKK00}.
In a temporal graph, each edge is associated with a set of timestamps indicating the times at which each edge is available. A \emph{temporal path} is a path
that respects the flow of time, i.e., the timestamps of the edges are traversed in non-decreasing order.
The classical notion of shortest-path fails to capture temporal constraints, thus motivating the study of other distance measures for temporal paths. For example, one may seek a temporal path that arrives as early as possible (Earliest-Arrival Path, \textsc{EA}), starts as late as possible (Latest-Departure Path, \textsc{LD}), minimizes the overall duration (Fastest-Time Path, \textsc{FT}), minimizes total traveling time (Shortest-Traveling Path, \textsc{ST}), minimizes number of used edges (Minimum-Hop Path, \textsc{MH}), or minimizes the total waiting time (Minimum-Waiting Path, \textsc{MW}). Note that these temporal measures are not proper distances; for example, they might not satisfy the triangle inequality. 

Motivated by applications in robotics, information spreading, and logistics, considerable research attention has recently been devoted to temporal network optimization problems, where the aim is to schedule the availability of edges in order to optimize some network property, see e.g.~\cite{deligkas2020optimizing,klobas2023interference,deligkas2023being,reachfast,kunz2023in,carnevale2025approximating,deligkas2025how}.
In this work, we focus on the problem of scheduling the availability of the edges of a graph in such a way that the worst-case temporal distance from a set of source vertices to all other vertices is optimal, according to a temporal distance measure $\D$. We introduce the $\D$-\textsc{Temporal Multi-Broadcast} problem, where we are given an underlying static graph, a \emph{traversal} function that associates each edge-time pair with the time needed to traverse it, a \emph{multiplicity} function indicating how many times an edge can appear, and a set of sources, and we must schedule the availability of edges over time in such a way that the multiplicity constraints are satisfied and the worst-case distance $\D\in \{\EA,\LD,\FT,\ST,\MH,\MW\}$ between a source and any other node is optimal. In other words, we assume a fixed infrastructure (a static graph with given traversal times) and an offline planner that assigns time labels to edges, subject to multiplicity constraints, to optimize the worst-case temporal distance from the sources.

The $\D$-\textsc{Temporal Multi-Broadcast} problem has several applications in logistics, multi-agent information spreading, and wireless networks.
Consider a set of different suppliers that distribute different kinds of goods to a set of distribution points or warehouses connected by a road network. They must schedule deliveries over the network in such a way that all different kinds of goods are delivered to all distribution points as quickly as possible, optimizing a chosen temporal distance measure.
We give a concrete example in Figures~\ref{fig:exampleone}--\ref{fig:exampletwo}.
In the network of Figure~\ref{fig:exampleone}, there are two suppliers that distribute two different products (e.g., eggs and milk) to all the distribution points in the network (e.g., the stores of a supermarket chain distributed over a region). An edge represents a road connection that can be used by trucks to deliver goods between two nodes. A pair $(t,w)$ next to an edge $e$ indicates that a truck can be scheduled at time $t$ on edge $e$ and the travel time along edge $e$ at time $t$ is $w$. We assume that no more than one connection can be established on each edge due to the availability of trucks. The distribution points can also be used to store the goods directed to other nodes. 
When a truck travels along an edge $e=(u,v)$ at a time $t$, it first loads the goods that were stored in $u$ before time $t$, and delivers them to $v$ at time $t+w$, where $w$ is the travel time of $e$ at time $t$. 
For example, supplier $M$ can send his goods to $v_3$ via edge $(M,v_3)$ at time $2$; at time 3, the goods reach $v_3$ and can be stored there so that they can also be sent to $v_4$ and $v_2$ at time 9 and 4, respectively.
The goal is to deliver both goods to all distribution points by optimizing the worst-case temporal measure. Several objective functions can naturally be defined. Figure~\ref{fig:exampletwo} (left) represents a schedule that minimizes the time when all the distribution points receive both kinds of goods: node $v_1$ is the last node to receive the goods from supplier $M$ at time 10 via the path passing through $v_3$, $v_2$, and $v_4$. Figure~\ref{fig:exampletwo} (center)  represents a schedule that minimizes the maximum
travel time over all goods: one of the longest travel time is from supplier $M$ to distribution point $v_2$, which takes a total of 3 time units via the path $M,v_3,v_2$. Finally, Figure~\ref{fig:exampletwo} (right) shows a schedule that minimizes the maximum time a good spends on the road: both goods depart from their respective suppliers and arrive at all distribution points, with at most 3 time steps spent in transit.

A similar problem arises in information spreading over a social network, where a set of agents, each holding different data, want to spread their data in such a way that all network users receive all the data as early as possible. Two users can exchange information through a \emph{connection} between them (e.g., an online or in-person meeting) in which users share the data they received so far, see e.g.~\cite{reachfast}. A connection can be scheduled at specific time slots, but a connection between the same two users cannot occur more than a fixed number of times. The aim is to schedule all connections in such a way that the time when the last user receives all the data is minimum. 
Similarly, in a wireless sensor network, the data gathered by a set of sensors must be broadcast to all the nodes as early as possible, but no more than a given number of connections can be established among nodes for energy-saving purposes.

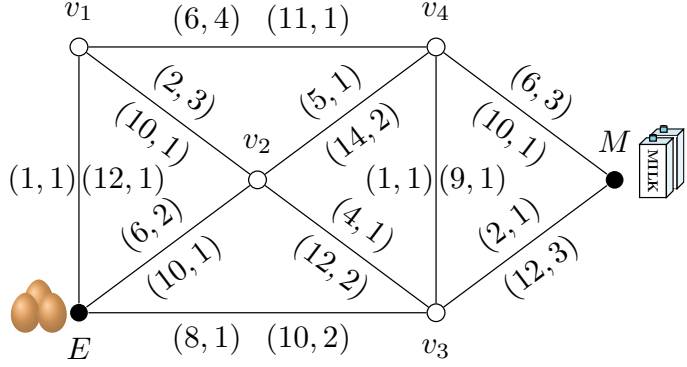
\begin{figure}[t]
\begin{center}
\resizebox{.58\columnwidth}{!}{\small
\begin{tikzpicture}
    \tikzstyle{every node}=[inner sep=2pt,circle,black,fill=black]
    \path (0,-1.5) node (s1)[label=below:$E$]{};
    \path (6,0) node (s2)[label=above:$M$]{};
    \tikzstyle{every node}=[inner sep=2pt,circle,draw,fill=white]
    \path (0,1.5) node (v1)[label=above:$v_1$]{}; 
    \path (2,0) node (v2)[label=above:$v_2$]{};
    \path (4,1.5) node (v4)[label=above:$v_4$]{};
    \path (4,-1.5) node (v3)[label=below:$v_3$]{};
    \tikzstyle{every node}=[]
    
    \draw (s1)-- node[left=-3pt,pos=.5]{$(1,1)$} node[right=-3pt,pos=.5]{$(12,1)$} (v1);
    \draw (s1)-- node[above,pos=.5,sloped]{$(6,2)$} node[below,pos=.5,sloped]{$(10,1)$} (v2);
    \draw (v1)-- node[above,pos=.5,sloped]{$(2,3)$} node[below,pos=.5,sloped]{$(10,1)$} (v2);
    \draw (v2)-- node[above,pos=.5,sloped]{$(4,1)$} node[below,pos=.5,sloped]{$(12,2)$} (v3);
    \draw (v2)-- node[above,pos=.5,sloped]{$(5,1)$} node[below,pos=.5,sloped]{$(14,2)$} (v4);
    \draw (v3)-- node[left=-3pt,pos=.5]{$(1,1)$} node[right=-3pt,pos=.5]{$(9,1)$} (v4);
    \draw (v3)-- node[above,pos=.5,sloped]{$(2,1)$} node[below,pos=.5,sloped]{$(12,3)$} (s2);
    \draw (v4)-- node[above,pos=.5,sloped]{$(6,3)$} node[below,pos=.5,sloped]{$(10,1)$} (s2);
    \draw (v1) -- node[above,pos=.35]{$(6,4)$} node[above,pos=.65]{$(11,1)$} (v4);
    \draw (s1) -- node[below,pos=.35]{$(8,1)$} node[below,pos=.65]{$(10,2)$} (v3);

    \begin{scope}[shift={(-0.45,-1.35)},scale=0.15]
      \def\eggheight{1.5cm}
      \path[preaction={fill=orange!50!white},
      ball color=orange!60!gray,fill opacity=.5]
      plot[domain=-pi:pi,samples=100]
      ({.78*\eggheight *cos(\x/4 r)*sin(\x r)},{-\eggheight*(cos(\x r))})
      -- cycle;
    \end{scope}
    \begin{scope}[shift={(-0.6,-1.5)},scale=0.15]
      \def\eggheight{1.5cm}
      \path[preaction={fill=orange!50!white},
      ball color=orange!60!gray,fill opacity=.5]
      plot[domain=-pi:pi,samples=100]
      ({.78*\eggheight *cos(\x/4 r)*sin(\x r)},{-\eggheight*(cos(\x r))})
      -- cycle;
    \end{scope}
    \begin{scope}[shift={(-0.3,-1.5)},scale=0.15]
        \def\eggheight{1.5cm}
        \path[preaction={fill=orange!50!white},
        ball color=orange!60!gray,fill opacity=.5]
        plot[domain=-pi:pi,samples=100]
        ({.78*\eggheight *cos(\x/4 r)*sin(\x r)},{-\eggheight*(cos(\x r))})
        -- cycle;
    \end{scope}

    \begin{scope}[shift={(6.4,-0.1)},scale=0.15]
        \definecolor{milkblue}{RGB}{173, 216, 230}
        \definecolor{milkwhite}{RGB}{255, 255, 255}
        \definecolor{capcolor}{RGB}{135, 206, 235}

        \draw[fill=milkwhite] (0,0) -- (2,0) -- (2,4) -- (0,4) -- cycle;
        
        \draw[fill=milkblue!30] (2,0) -- (2.5, 0.5) -- (2.5, 4.5) -- (2, 4) -- cycle;
        
        \draw[fill=milkblue!50] (0,4) -- (2, 4) -- (2.5, 4.5) -- (0.5, 4.5) -- cycle;
        \node[scale=0.45,font=\bfseries,rotate=-90] at (1, 2.4) {\ \ MILK};
        \draw[white] (-0.5,-0.14)--(2.5,-0.14);
        \draw[white] (-0.5,4.6)--(2.8,4.6);
        \draw[white] (-0.14,0)--(-.14,4.2);
        \draw[white] (2.6,0)--(2.6,4.6);
        \draw[white] (2.06,-.14)--(2.7,0.5);
        \draw[white] (-0.25,3.94)--(.6,4.8);

        \coordinate (CapLeft) at (0.75,4.25);
        \coordinate (CapRight) at (1.25,4.25);
        \coordinate (CapLeftTop) at (0.75,4.65);
        \coordinate (CapRightTop) at (1.25,4.65);

        \draw[fill=capcolor, draw=black] 
            (CapLeft) -- (CapRight) -- (CapRightTop) -- (CapLeftTop) -- cycle;

        \draw[fill=capcolor!50!black] 
            (0.75,4.7) 
            arc[start angle=180,end angle=0,x radius=.25,y radius=0.05];
    \end{scope}
    \begin{scope}[shift={(6.28,-0.2)},scale=0.15]
        \definecolor{milkblue}{RGB}{173, 216, 230}
        \definecolor{milkwhite}{RGB}{255, 255, 255}
        \definecolor{capcolor}{RGB}{135, 206, 235}

        \draw[fill=milkwhite] (0,0) -- (2,0) -- (2,4) -- (0,4) -- cycle;
        
        \draw[fill=milkblue!30] (2,0) -- (2.5, 0.5) -- (2.5, 4.5) -- (2, 4) -- cycle;
        
        \draw[fill=milkblue!50] (0,4) -- (2, 4) -- (2.5, 4.5) -- (0.5, 4.5) -- cycle;
        \node[scale=0.45,font=\bfseries,rotate=-90] at (1, 2.4) {\ \ MILK};
        \draw[white] (-0.5,-0.14)--(2.5,-0.14);
        \draw[white] (-0.5,4.6)--(2.8,4.6);
        \draw[white] (-0.14,0)--(-.14,4.2);
        \draw[white] (2.64,0)--(2.64,0.7);
        \draw[white] (2.64,4)--(2.64,4.6);
        \draw[white] (2.06,-.14)--(2.7,0.5);
        \draw[white] (-0.25,3.94)--(.6,4.8);

        \coordinate (CapLeft) at (0.75,4.25);
        \coordinate (CapRight) at (1.25,4.25);
        \coordinate (CapLeftTop) at (0.75,4.65);
        \coordinate (CapRightTop) at (1.25,4.65);

        \draw[fill=capcolor, draw=black] 
            (CapLeft) -- (CapRight) -- (CapRightTop) -- (CapLeftTop) -- cycle;

        \draw[fill=capcolor!50!black] 
            (0.75,4.7) 
            arc[start angle=180,end angle=0,x radius=.25,y radius=0.05];
    \end{scope}
\end{tikzpicture}
}
\caption{Instance of $\D$-\textsc{TMB} modeling a road network. Vertices $E$ and $M$ are sources (suppliers); each label $(t,w)$ on an edge indicates an available connection at time $t$ with traversal time $w$. In this example we assume that at most one label can be selected per edge.}
\label{fig:exampleone}
\end{center}
\end{figure}

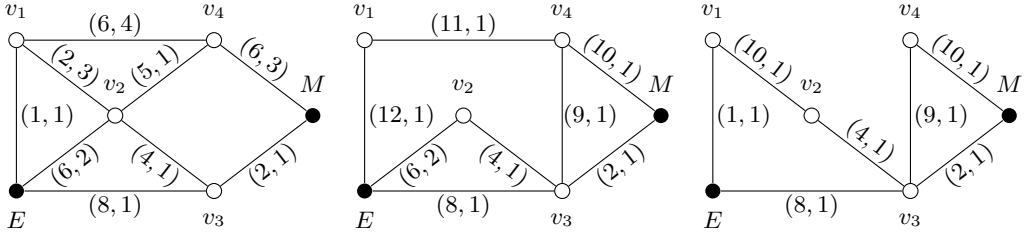
\begin{figure}[t]
\begin{center}
\scalebox{1}{\footnotesize
\begin{tikzpicture}[]
    \tikzstyle{every node}=[inner sep=2pt,circle,black,fill=black]
    \path (0,-1) node (s1)[label=below:$E$]{};
    \path (3.9,0) node (s2)[label=above:$M$]{};
    \tikzstyle{every node}=[inner sep=2pt,circle,draw,fill=white]
    \path (0,1) node (v1)[label=above:$v_1$]{}; 
    \path (1.3,0) node (v2)[label=above:$v_2$]{};
    \path (2.6,1) node (v4)[label=above:$v_4$]{};
    \path (2.6,-1) node (v3)[label=below:$v_3$]{};
    \tikzstyle{every node}=[]
    \draw (s1)-- node[right=-2pt,pos=.5]{$(1,1)$}  (v1);
    \draw (s1)-- node[below=-2pt,pos=.5,sloped]{$(6,2)$} (v2);
    \draw (v1)-- node[above=-2pt,pos=.5,sloped]{$(2,3)$} (v2);
    \draw (v2)-- node[below=-2pt,pos=.5,sloped]{$(4,1)$} (v3);
    \draw (v2)-- node[above=-2pt,pos=.5,sloped]{$(5,1)$} (v4);
    \draw (v3)-- node[below=-2pt,pos=.5,sloped]{$(2,1)$} (s2);
    \draw (v4)-- node[above=-2pt,pos=.4,sloped]{$(6,3)$} (s2);
    \draw (v1)-- node[above=-2pt,pos=.5]{$(6,4)$} (v4);
    \draw (s1)-- node[below=-2pt,pos=.5]{$(8,1)$} (v3);
\end{tikzpicture}
\begin{tikzpicture}[]
    \tikzstyle{every node}=[inner sep=2pt,circle,black,fill=black]
    \path (0,-1) node (s1)[label=below:$E$]{};
    \path (3.9,0) node (s2)[label=above:$M$]{};
    \tikzstyle{every node}=[inner sep=2pt,circle,draw,fill=white]
    \path (0,1) node (v1)[label=above:$v_1$]{}; 
    \path (1.3,0) node (v2)[label=above:$v_2$]{};
    \path (2.6,1) node (v4)[label=above:$v_4$]{};
    \path (2.6,-1) node (v3)[label=below:$v_3$]{};
    \tikzstyle{every node}=[]
    \draw (s1)-- node[right=-2pt,pos=.5]{$(12,1)$}  (v1);
    \draw (s1)-- node[below=-2pt,pos=.5,sloped]{$(8,1)$} (v3);
    \draw (v3)-- node[right=-3pt,pos=.5]{$(9,1)$} (v4);
    
    \draw (s1)-- node[below=-2pt,pos=.5,sloped]{$(6,2)$} (v2);
    
    \draw (v2)-- node[below=-2pt,pos=.5,sloped]{$(4,1)$} (v3);
    \draw (v3)-- node[below=-2pt,pos=.5,sloped]{$(2,1)$} (s2);
    \draw (v4)-- node[above=-2pt,pos=.4,sloped]{$(10,1)$} (s2);
    \draw (v1)-- node[above=-2pt,pos=.5]{$(11,1)$} (v4);
\end{tikzpicture}
\begin{tikzpicture}[]
    \tikzstyle{every node}=[inner sep=2pt,circle,black,fill=black]
    \path (0,-1) node (s1)[label=below:$E$]{};
    \path (3.9,0) node (s2)[label=above:$M$]{};
    \tikzstyle{every node}=[inner sep=2pt,circle,draw,fill=white]
    \path (0,1) node (v1)[label=above:$v_1$]{}; 
    \path (1.3,0) node (v2)[label=above:$v_2$]{};
    \path (2.6,1) node (v4)[label=above:$v_4$]{};
    \path (2.6,-1) node (v3)[label=below:$v_3$]{};
    \tikzstyle{every node}=[]
    \draw (s1)-- node[right=-2pt,pos=.5]{$(1,1)$} (v1);
    \draw (v1)-- node[above=-2pt,pos=.4,sloped]{$(10,1)$} (v2);
    \draw (v2)-- node[above=-2pt,pos=.5,sloped]{$(4,1)$} (v3);
    \draw (s1)-- node[below=-2pt,pos=.5,sloped]{$(8,1)$} (v3);
    \draw (v3)-- node[below=-2pt,pos=.5,sloped]{$(2,1)$} (s2);
    \draw (v4)-- node[above=-2pt,pos=.4,sloped]{$(10,1)$} (s2);
    \draw (v3)-- node[right=-2pt,pos=.5]{$(9,1)$} (v4);
\end{tikzpicture}}
\caption{Three solutions for the $\D$-\textsc{Temporal Multi-Broadcast} instance of Figure~\ref{fig:exampleone}, where the aim is to minimize the maximum earliest arrival time, the maximum duration of travels, and the maximum travel time, respectively.}
\label{fig:exampletwo}
\end{center}
\end{figure}
The $\D$-\textsc{Temporal Multi-Broadcast} problem generalizes the \textsc{ReachFast} problem, introduced in~\cite{reachfast}, 
where we are given a temporal graph and are allowed to shift the temporal edges, i.e., to change the times at which edges appear, in such a way that the latest earliest arrival time from a source to a node is minimum. We prove that, when we consider only earliest arrival distances, problems \textsc{ReachFast} and $\EA$-\textsc{Temporal Multi-Broadcast} are equivalent under value-preserving reductions (see Section~\ref{sec:defproblems}).
 ~\cite{reachfast} proved that \textsc{ReachFast} is tractable for a single source, for two sources when the underlying graph consists of a number of parallel paths between them, and for multiple sources when the underlying graph is a tree. Moreover, they showed that it is \NP-complete to decide whether \textsc{ReachFast} with two sources admits a solution of value 6, which implies that the problem is \textsc{APX}-hard.

\subsection{Our Results}
We begin by introducing formal definitions of $\D$-\textsc{ReachFast} and $\D$-\textsc{Temporal Multi-Broadcast}, and show their equivalence for any $\D$, which allows us to focus exclusively on the latter. We then characterize the computational complexity and approximability of $\D$-\textsc{Temporal Multi-Broadcast} in terms of distance function and number of sources.

In the single-source setting, the problem is tractable for \textsc{EA} (as already established in \cite{reachfast}) and we show that it is also tractable for \textsc{LD}, while for the other temporal distances we prove that it is \NP-complete to decide whether the problem admits a solution of a given value. This latter result implies that the problem cannot be approximated within a given factor that depends on the specific temporal distance measure, unless $\textsc{P}=\NP$. In particular, for $\D\in \{\ST,\MH\}$, $\D$-\textsc{Temporal Multi-Broadcast} cannot be approximated within a factor better than 2, while,
for $\D\in \{\FT,\MW\}$, it cannot be approximated within any factor smaller than an exponential in the number of nodes or a value depending on the input weights and the distance function. In the latter case, we further show that a simple approximation algorithm matches this lower bound.

For the case of multiple sources, we prove that even deciding whether a feasible solution exists is \NP-complete, regardless of the adopted temporal distance function. Consequently, if feasibility is not assumed a priori, the problem cannot be approximated within any finite factor. Moreover, this result holds for any fixed number of sources greater than one. 
We complement these negative results by identifying structural conditions that guarantee tractability for \textsc{EA} and \textsc{LD}. We establish that for \textsc{EA} and \textsc{LD}, the problem admits a polynomial-time algorithm for any number of sources, when the multiplicity is at least equal to the number of sources or when the underlying static graph is a tree and the multiplicity of each edge is at least two.

Our results are summarized in Table~\ref{tab:results}.

\begin{table}[t]
\centering
\caption{Overview of the computational complexity and hardness of approximation for $\D$-\textsc{Temporal Multi-Broadcast} problem, depending on the chosen temporal distance measure $\D$ and the number of sources. The hardness of approximation bounds hold under $\textsc{P}\neq \NP$.}
\begin{tabular}{ccc}
\toprule
\textbf{Distance $\D$} & \textbf{Single source} & \textbf{$\geq 2$ sources} \\
\midrule
EA, LD & Poly-time & \multirow{3}{*}{\parbox{2.4cm}{\centering \NP-complete \& Inapproximable within any factor}} \\
ST, MH & \NP-complete \& \textsc{APX}-hard (factor $<2$) & \\
FT, MW & \NP-complete \& not in \textsc{APX} (factor $<2^{|V|^{O(1)}})$ & \\
& & \\
\midrule
\multirow{2}{*}{EA, LD} & \multicolumn{2}{c}{Poly-time if the multiplicity is at least $|S|$} \\
 & \multicolumn{2}{c}{Poly-time for trees if the multiplicity is at least $2$} \\
\bottomrule
\end{tabular}

\label{tab:results}
\end{table}

\subsection{Related Work}
Delaying (i.e., postponement of edge availability) and shifting (i.e., postponement or advancement of edge availability) are common tools to influence the temporal reachability of a temporal graph. Both have been studied in several works as means to satisfy specific connectivity objectives. Motivated by application in epidemiology, \cite{deligkas2020optimizing} introduced the operation of delaying and studied the problem of minimizing reachability under a bounded number of delays. They proved that minimizing reachability, as well as minimizing the maximum or average reachability, is \NP-hard when a given number of edges can each be delayed by at most a specified input parameter. The problems become tractable when an unbounded number of delays is allowed. 

\cite{kutnersandshift} studied whether, given a temporal graph with passengers (i.e., vertices), each having a specified destination and deadline, it is possible to delay edges so that all passengers reach their targets within the required time. They proved that the problem is \NP-complete both for graphs of lifetime 2 and for planar graphs with larger lifetime. On the positive side, the problem is tractable on trees and fixed-parameter tractable (FPT) when parametrized by the number of demands plus the size of the feedback edge set of the underlying graph. They also provide a comparison of their problem with other problems in the literature involving delaying or shifting (see Table 1 in \cite{kutnersandshift}), which we refer the interested reader to.

Temporal branching and temporal spanning subgraphs~\cite{spanningtreesEA,packingaborescence,branchings} are similar to our problem in the single-source case. A temporal branching is a directed out-tree rooted at a source that spans all vertices, whereas in a temporal spanning subgraph, the underlying graph is not required to be a tree but is still required to be of minimum size. ~\cite{spanningtreesEA} proved that, assuming the source can reach every other vertex, it is possible to compute both a temporal branching from the source and an $\EA$-branching, which is a branching in which the source reaches every other vertex at the earliest possible arrival time. ~\cite{branchings} studied temporal branchings and temporal spanning subgraphs using the same six distance measures adopted in this work. For each distance $\D$, their objective is to find a temporal branching (or a spanning subgraph) rooted at a given vertex that reaches every other vertex via a temporal path whose value under distance $\D$ is minimized. They proved that finding $\D$-branchings and $\D$-spanning subgraphs is \NP-complete for every temporal distance but $\EA$. The main difference between their problem and our single-source case is that we only require the \emph{maximum} distance between the source and any vertex to be minimized.

\section{Preliminaries}\label{sec:pre}
We denote by $\mathbb{N}$ the set of positive integers, and by $\mathbb{N}_0 := \mathbb{N}\cup \{0\}$ the set of nonnegative integers. For an integer $k\in \mathbb{N}$, let $[k]:=\{1,2,\dots,k\}$ and $[k]_0:= \{0\} \cup [k]$.

In this paper, we consider finite loopless graphs. Let $G=(V,E)$ be a simple undirected graph with $n=|V|$ vertices and $m=|E|$ edges. 
For an edge $e=\{u,v\}\in E$, we refer to $u$ and $v$ as the \emph{endpoints} of the edge $e$. Moreover, we say that a vertex $u$ is \emph{incident} to edge $e$ and that $u$ and $v$ are \emph{adjacent}.
A \emph{path} $P$ from a vertex $u$ to a distinct vertex $v$ is a sequence of distinct edges $e_1, e_2, \dots, e_k$ such that $u$ is incident to $e_1$, $v$ is incident to $e_k$, and two edges share an endpoint if and only if they are consecutive in the sequence. 
We say that a path consisting of $k$ edges has \emph{length} $k$.

A \emph{temporal graph} $\G=(G,\lambda,\tr)$ is defined over a finite time span $\tau \in \mathbb{N}$. Here, $G=(V,E)$ is a static graph (called the \emph{underlying graph}), $\lambda:E \to 2^{[\tau]}$ assigns a set of time labels to each edge of $G$, and $\tr : E \times [\tau] \to \mathbb{N}_0$ is the \emph{traversal function}.
A pair $(e,t)$, where $e \in E$ and $t \in \lambda(e)$, is called a \emph{temporal edge}. The traversal function $\tr$ assigns a traversal time to each temporal edge. Specifically, if $\tr(e,t)=x$, this means that traversing edge $e$ from one of its endpoints at time $t$ results in arriving at the other endpoint at time $t+x$.

To \textit{shift} a temporal edge $(e,t)$ means replacing $t$ with $t+x$ in $\lambda$, where $x\in \mathbb{Z}\setminus \{0\}$ is a integer such that $1\leq t+x \leq \tau$. 

A \textit{temporal path} $P$ from $u$ to $v$ is a sequence of temporal edges $(e_1,t_1),(e_2,t_2),\dots,(e_k,t_k)$ such that $e_1,e_2,\dots,e_k$ form a path from $u$ to $v$ in $G$, $t_i\in \lambda(e_i)$ for all $i\in [k]$, and $t_j+\tr(e_j,t_j)\leq t_{j+1}$ for $j\in [k-1]$. If such a path exists, we say that $u$ can \textit{temporally reach} $v$. Let $|P|$ be the number of temporal edges in $P$, i.e., $k$. Moreover, let $\mathcal{P}_{\G}(u,v)$ denote the set of all valid temporal paths from $u$ to $v$ in $\G$.

A temporal spanning out-tree (TSOT) rooted at vertex $s$ is a temporal graph $\G$ whose underlying graph $G$ is a tree, $|\lambda(e)|=1$ for all $e\in G$, and $s$ can temporally reach every other vertex. We denote by $\T_{s}$ such a TSOT rooted at vertex~$s$.

Given a graph $G$ and a traversal function $\tr$, the temporal graph $\G_{\tr}=(G,\lambda,\tr)$, where $\lambda(e)=[\tau]$ for every edge $e\in G$, is called \emph{full temporal graph} of $G$ and $\tr$. That is, every edge is available at every time step in the time span $[\tau]$. We use full temporal graphs as a technical tool.

Given a temporal path $P=\langle (e_1,t_1),(e_2,t_2),\dots,(e_k,t_k)\rangle$, we say that the \textit{departure time} of $P$ is $\td (P):=t_1$, and the \textit{arrival time} is $\ta (P):=t_k+\tr(e_k,t_k)$. Furthermore, we call the \textit{duration} of $P$ the difference between the arrival time and the departure time, i.e. $\text{dur}(P):=\big(t_k+\tr(e_k,t_k)\big)-t_1$. The \textit{traveling time} of $P$ is defined as the sum of the traversal time of its edges, that is, $\ttr (P):=\sum_{i=1}^{k} \tr(e_i,t_i)$, whereas the \textit{waiting time} of $P$ is the sum of the idle times along the path, formally $\wait(P):= \sum_{i=1}^{k-1} \big(t_{i+1}-t_i-\tr(e_i,t_i)\big)$.

Let $u$ and $v$ be two distinct vertices of $\G$ such that $u$ can temporally reach $v$. We define the following six commonly used notions of temporal distance from $u$ to $v$:
\begin{itemize}
    \item \emph{Earliest-Arrival} (\textsc{EA}) : $\EA(u,v, \G ):=\min \{\ta(P) \mid P\in\mathcal{P}_{\G}(u,v) \}$.
    \item \emph{Latest-Departure} (\textsc{LD}) : $\LD(u,v, \G ):=\max \{\td(P) \mid P\in\mathcal{P}_{\G}(u,v) \}$.
    \item \emph{Fastest-Time} (\textsc{FT}) : $\FT(u,v, \G ):=\min \{\dur(P) \mid P\in\mathcal{P}_{\G}(u,v) \}$.
    \item \emph{Shortest-Traveling} (\textsc{ST}) : $\ST(u,v, \G ):=\min \{\ttr(P) \mid P\in\mathcal{P}_{\G}(u,v) \}$.
    \item \emph{Minimum-Hop} (\textsc{MH}) : $\MH(u,v, \G ):=\min \{|P| \mid P\in\mathcal{P}_{\G}(u,v) \}$.
    \item \emph{Minimum-Waiting} (\textsc{MW}) : $\MW(u,v, \G ):=\min \{\wait (P) \mid P\in\mathcal{P}_{\G}(u,v) \}$.
\end{itemize}

For any pair of vertices, all previously defined measures can be computed in polynomial time.~\cite{branchings} (Table 1) summarize the computational complexities and the corresponding references.

Let $\D$ be any of the previous distances and let $u,v$ be two vertices. We say that $P\in \mathcal{P}_{\G}(u,v)$ \emph{realizes} $\D(u,v,\G)$ if the value of $P$ with respect to $\D$ is exactly $\D(u,v,\G)$.

An instance of \textsc{SAT} consists of a Boolean formula $\phi$ in Conjunctive Normal Form (CNF), with $q$ clauses $C_1, C_2, \dots, C_q$ over $p$ variables $\mathcal{X}=\{x_1, x_2, \dots, x_p\}$. Each clause $C_i$ is a disjunction of literals, i.e., $C_i=(l_1 \lor l_2 \lor \dots \lor l_k)$ with $k \geq 2$, where a literal is either a variable or the negation of a variable. The goal is to decide whether there exists a truth assignment $\alpha: \mathcal{X} \to \{0,1\}$ such that all clauses are satisfied under this assignment, i.e., $\phi(\alpha)=1$. When all clauses contain exactly three literals, i.e. $C_i=(l_1 \lor l_2 \lor l_3)$, the problem is called 3-\textsc{SAT}. Both \textsc{SAT} and 3-\textsc{SAT} are well-known to be \NP-complete~\cite{gareyjohnson}.

\section{\texorpdfstring{$\D$-\RF\ and\ $\D$-\TMB\ temporal optimization problems}{D-RF and D-TMB temporal optimization problems}}\label{sec:defproblems}

We now define the family of optimization problems $\D$-\RF over temporal distances $\{\EA,\LD,\FT,\ST,\MH,\MW\}$. For $\D\in \{\EA,$ $\FT,\ST,\MH,$ $\MW\}$, we define the $\D$-\RF problem as follows:

\begin{definition}[$\D$-\RF problem]
    Given a temporal graph $\G=(G,\lambda,\tr)$ and a set of sources $S\subseteq V$, the goal is to shift the temporal edges of $\G$ so that, in the resulting labeling $\lambda^*$, every $s\in S$ can temporally reach every other vertex in $(G,\lambda^*,\tr)$, and $\max_{s\in S} \max_{v \in V} \D(s,v,(G,\lambda^*,\tr))$ is minimized.
\end{definition}
For $\D=\LD$, the definition is slightly different:
\begin{definition}[$\LD$-\RF problem]
    Given a temporal graph $\G=(G,\lambda,\tr)$ and a set of sources $S\subseteq V$, the goal is to shift the temporal edges of $\G$ so that, in the resulting labeling $\lambda^*$, every $s\in S$ can temporally reach every other vertex in $(G,\lambda^*,\tr)$, and $\min_{s\in S} \min_{v \in V} \LD(s,v,(G,\lambda^*,\tr))$ is maximized.
\end{definition}

To avoid trivial negative cases, we assume that for every $s\in S$ there exists a sequence of shifts such that the resulting labeling allows $s$ to temporally reach every vertex.

Since we are allowed to shift any number of edges by an arbitrary amount, these problems can equivalently be interpreted as labeling problems. Thus, for $\D \in \{\EA,\FT,\ST,\MH,\MW\}$ we define the $\D$-\textsc{Temporal Multi-Broadcast} problem ($\D$-\TMB) as follows:
\begin{definition}[$\D$-\textsc{Temporal Multi-Broadcast} problem]
    Given a graph $G=(V,E)$, a subset of vertices $S\subseteq V$, a traversal function $\tr$, and a multiplicity function $\mu : E \to [\tau]$. Find a labeling $\lambda$ of $G$ such that for all $e\in E$, it holds $ |\lambda(e)|\leq \mu(e)$, every $s\in S$ can temporally reach every other vertex in $(G,\lambda,\tr)$, and $\max_{s\in S} \max_{v \in V} \D(s,v,(G,\lambda,\tr))$ is minimized.
\end{definition}
As in the previous case, $\LD$ requires a separate definition:
\begin{definition}[$\LD$-\textsc{Temporal Multi-Broadcast} problem]
    Given a graph $G=(V,E)$, a subset of vertices $S\subseteq V$, a traversal function $\tr$, and a multiplicity function $\mu : E \to [\tau]$. Find a labeling $\lambda$ of $G$ such that for all $e\in E$, it holds $ |\lambda(e)|\leq \mu(e)$, every $s\in S$ can temporally reach every other vertex in $(G,\lambda,\tr)$, and $\min_{s\in S} \min_{v \in V} \LD(s,v,(G,\lambda,\tr))$ is maximized.
\end{definition}
Similarly to $\D$-\RF, we avoid trivial negative cases assuming that $\tr$ is such that, for each single source $s\in S$, there exists a labeling $\lambda$ such that $s$ can temporally reach every vertex in $(G,\lambda,\tr)$.

We show that for any $\D \in \{\EA,\LD,\FT,\ST,\MH,\MW\}$ the $\D$-\RF problem and the $\D$-\TMB problem are polynomially equivalent: any instance of one problem can be transformed into an instance of the other in polynomial time, and each feasible solution for an instance of one problem can be converted into a feasible solution for the corresponding instance of the other, while preserving the value of the objective function. This highlights a close correspondence between unconstrained shifting in $\D$-\RF and assigning time labels under a multiplicity function that bounds the number of labels per edge in $\D$-\TMB.
\begin{theorem}\label{thm:equivalence}
Let $\D \in \{\EA,\LD,\FT,\ST,\MH,\MW\}$. The $\D$-\TMB and $\D$-\RF problems are equivalent.
\end{theorem}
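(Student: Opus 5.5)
We prove the equivalence by giving two polynomial-time, value-preserving transformations, one in each direction, and checking that feasible solutions correspond. The key point is that shifting imposes no constraint beyond the labels staying in $[\tau]$. Hence the only information from $\lambda$ that matters in a $\D$-\RF instance is how many temporal edges each edge carries. One subtlety must be handled first. The labeling $\lambda$ in the paper is set-valued, so two temporal edges on the same edge $e$ shifted to the same time collapse into one label. A shifting solution therefore yields a labeling with $|\lambda^*(e)| \le |\lambda(e)|$, which is exactly a multiplicity constraint with inequality.

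\emph{From $\D$-\RF to $\D$-\TMB.} Given $(G,\lambda,\tr,S)$ with time span $\tau$, build the instance $(G,S,\tr,\mu)$ with the same $\tau$ and $\mu(e):=|\lambda(e)|$. If $\lambda(e)=\emptyset$ for some $e$, either set $\mu(e)=0$ (allowing range $[\tau]_0$) or delete $e$ from $G$, since it can never be used; I would pick deletion and note that it changes no temporal path. Any sequence of shifts produces $\lambda^*$ with $|\lambda^*(e)|\le|\lambda(e)|=\mu(e)$, so $\lambda^*$ is feasible for \TMB. Conversely, given a feasible \TMB labeling $\lambda'$ with $1\le|\lambda'(e)|\le\mu(e)$ (an edge with an empty label set can be given an arbitrary label without hurting reachability or any $\D$ value), we realize it by shifts. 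List $\lambda(e)=\{t_1<\dots<t_k\}$ and $\lambda'(e)=\{t'_1<\dots<t'_j\}$ with $j\le k$. Shift $t_i$ to $t'_i$ for $i\le j$, and shift $t_i$ to $t'_j$ for $i>j$; shifts by $x=0$ are simply omitted. The resulting set is exactly $\lambda'(e)$. Since the temporal graphs coincide, reachability from every $s\in S$ and the objective value, whether the max--max of $\D$ or the min--min of $\LD$, are identical.

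\emph{From $\D$-\TMB to $\D$-\RF.} Given $(G,S,\tr,\mu)$, define $\lambda(e):=[\mu(e)]\subseteq[\tau]$, which is valid because $\mu(e)\le\tau$. The same two arguments apply verbatim: shifted labelings satisfy the multiplicity bound, and every labeling within the multiplicity bound is reachable by shifts. The non-triviality assumptions also transfer, because the sets of achievable labelings coincide, so per-source reachability is feasible in one instance iff it is in the other.

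The only step requiring care is the merging of labels under shifting, where the set semantics makes $|\lambda^*(e)|<|\lambda(e)|$ possible. This is why the \TMB constraint must be $\le\mu(e)$ rather than equality, and why the correspondence is a bijection between achievable labelings rather than between shift sequences. A secondary detail is handling edges with zero labels or $\mu$ ranging in $[\tau]$. Once these are fixed, value preservation is immediate for all six distances because both problems evaluate $\D$ on the same temporal graph $(G,\lambda^*,\tr)$.
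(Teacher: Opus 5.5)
Your proof is correct and follows the paper's route. You use the same instance maps $\mu(e)=|\lambda(e)|$ and $\lambda(e)=[\mu(e)]$, and you realize a target labeling by matching the sorted labels of $\lambda(e)$ to those of the target; your version is in fact more careful than the paper's, since you collapse surplus labels onto $t'_j$ rather than merely ``ignoring'' them as the paper does, and you handle empty label sets and the requirement $\mu(e)\geq 1$. One small caveat: under the set semantics you adopt, shifts applied one at a time must be ordered to avoid premature merges (e.g.\ $\{1,2\}\to\{2,3\}$ via $1\mapsto 2$ then $2\mapsto 3$ yields $\{3\}$), which the paper sidesteps by first discarding the labels common to both sets.
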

\begin{proof}
    We first prove that $\D$-\TMB can be reduced to $\D$-\RF, and then we prove the converse. Note that the objective function is identical in both formulations.
    
    Given an instance $(G, S, \tr, \mu)$ of $\D$-\TMB, we construct a corresponding instance of $\D$-\RF as follows. We define the temporal graph $\G = (G, \lambda, \tr)$, where $\lambda(e) = [\mu(e)]$ for every $e \in E$, and $\tr$ remain unchanged. Now, given a solution $\lambda^*$ to $\D$-\RF on $(\G,S)$, it is straightforward to verify that $\lambda^*$ is also a valid solution to $\D$-\textsc{TMB}.

    Conversely, given an instance $((G,\lambda,\tr), S)$ of $\D$-\RF, we construct the multiplicity function by setting $\mu(e) = |\lambda(e)|$ for every $e \in E$. Given a solution $\lambda^*$ to $\D$-\textsc{TMB} on $(G, S,\tr,\mu)$, we need to verify that $\lambda^*$ can be obtained from $\lambda$ by shifting temporal edges.
    
    To this aim, let $e\in E$ be any edge in the underlying graph, we will prove that we can transform $\lambda(e)$ into $\lambda^*(e)$ through shifting. If $|\lambda(e)| > |\lambda^*(e)|$, we ignore the last $|\lambda(e)| - |\lambda^*(e)|$ elements of $\lambda(e)$. Moreover, we also disregard elements that appear in both sequences, since they require no shifting. We then process $\lambda(e)$ and $\lambda^*(e)$ in parallel, sorting the elements in increasing order: at each step $i$, we compare the $i$-th element of $\lambda(e)$, say $t_i$, with the $i$-th element of $\lambda^*(e)$, say $t'_i$. Finally, we observe that $x_i = t'_i - t_i$ defines a valid sequence of shifts that transforms $\lambda(e)$ into $\lambda^*(e)$.
\end{proof}

In the rest of the paper, we adopt the $\D$-\TMB formulation as it captures the same objectives as $\D$-\RF while avoiding explicit discussion of temporal edge shifts.

\section{Single source}\label{sec:singlesource}
In this section, we investigate the computational complexity of the $\D$-\TMB problems when $|S|=1$.   \cite{reachfast} proved that the \textsc{EA}-\RF problem on a single source can be solved in polynomial time, and thus so can \textsc{EA-TMB}. Alternatively, this result can be derived from \cite{spanningtreesEA} (Algorithm 2), where the authors showed that, given a temporal graph $\G$, one can compute in polynomial time a TSOT $\T_r$ rooted at any vertex $r$ that realizes all Earliest-Arrival distances, i.e., such that $\EA(r,v,\G)=\EA (r,v,\T_r)$ for all $v\in V$. Consequently, given an instance $(G,\{s\},\tr,\mu)$ of \textsc{EA-TMB}, we can construct the full temporal graph $\G_{\tr}=(G,\lambda,\tr)$ of $G$ and $\tr$.
Computing a TSOT $\T_s$ of $\G_{\tr}$ rooted at $s$ yields a feasible solution to the given instance of \textsc{EA-TMB}, as in $\T_s$ there is at most one label per edge. Moreover, $\T_s$ is an optimal solution for the \textsc{EA-TMB} instance since it guarantees that $s$ reaches \emph{every node} at the earliest possible arrival time.
\begin{theorem}\label{thm:EApolytime}[\cite{reachfast},\cite{spanningtreesEA}]
    The \textsc{EA-TMB} problem on the instance $(G,S,\tr,\mu)$ is solvable in polynomial time when $|S|=1$.
\end{theorem}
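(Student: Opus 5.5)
The plan is to reduce the single-source instance to one computation on the full temporal graph, followed by an earliest-arrival spanning-tree extraction.

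\textbf{Construction.} Given $(G,\{s\},\tr,\mu)$, build the full temporal graph $\G_{\tr}=(G,\lambda,\tr)$ with $\lambda(e)=[\tau]$ for every $e\in E$. This has $m\tau$ temporal edges, which is polynomial since $\mu$ maps into $[\tau]$ and $\tau$ is part of the input.

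\textbf{Lower bound.} Every labeling $\lambda'$ respecting $\mu$ satisfies $\lambda'(e)\subseteq[\tau]=\lambda(e)$. Hence every temporal path in $(G,\lambda',\tr)$ is also a temporal path in $\G_{\tr}$, so
\[
\EA(s,v,(G,\lambda',\tr))\geq \EA(s,v,\G_{\tr})\quad\text{for all } v .
\]
It follows that $\max_v \EA(s,v,\G_{\tr})$ is a lower bound on the optimum. By the feasibility assumption, $s$ reaches every vertex in some labeling, and therefore also in $\G_{\tr}$, so all these values are finite.

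\textbf{Matching the bound.} Apply Algorithm 2 of \cite{spanningtreesEA} to $\G_{\tr}$ with root $s$. This yields in polynomial time a TSOT $\T_s$ with $\EA(s,v,\T_s)=\EA(s,v,\G_{\tr})$ for every $v$. Define $\lambda^*(e)$ to be the single label of $e$ in $\T_s$ if $e$ is a tree edge, and $\emptyset$ otherwise. Then $|\lambda^*(e)|\le 1\le\mu(e)$, where $\mu(e)\geq 1$ because $\mu$ maps into $[\tau]$. Also $s$ temporally reaches every vertex in $(G,\lambda^*,\tr)$. Adding non-tree edges with no labels creates no new temporal paths, so the earliest-arrival values are exactly those of $\T_s$ and meet the lower bound pointwise. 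Thus $\lambda^*$ is optimal.

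\textbf{Where care is needed.} The only delicate step is the applicability of \cite{spanningtreesEA}'s algorithm to our model with non-uniform traversal times $\tr(e,t)$. I would check this by noting that their earliest-arrival tree construction is a Dijkstra-like foremost-path computation. Such a computation only needs that, from a vertex reached at time $a$, one can scan incident temporal edges $(e,t)$ with $t\ge a$ and relax arrival time $t+\tr(e,t)$.

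If that reference does not directly cover arbitrary $\tr$, I would instead argue directly:
\begin{itemize}
\item Run a foremost-path label-setting algorithm, which is correct because arrival times along a path are non-decreasing since $\tr\ge 0$.
\item Record for each $v\neq s$ the predecessor temporal edge that achieves $\EA(s,v)$.
\item The predecessor pointers form a tree, since a predecessor is always settled earlier.
\item Prefix-optimality means following tree edges from $s$ gives a valid temporal path of arrival $\EA(s,v)$. This holds because the predecessor's arrival time is at most the departure label.
\end{itemize}
Alternatively, the result follows from \cite{reachfast} via Theorem~\ref{thm:equivalence}.
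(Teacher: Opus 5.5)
Your proposal is correct and follows the paper's own argument: build the full temporal graph $\G_{\tr}$, extract an earliest-arrival TSOT rooted at $s$ via \cite{spanningtreesEA} (or appeal to \cite{reachfast}), and note that this TSOT is feasible with one label per edge and optimal, since no labeling can beat the earliest-arrival times in $\G_{\tr}$. Your explicit subset-of-labels lower bound and your self-contained foremost-tree fallback for general $\tr$ only spell out what the paper leaves implicit. One small correction: $m\tau$ is polynomial because $\tr$ is given on all of $E\times[\tau]$, not because $\mu$ maps into $[\tau]$.
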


We show that \textsc{LD-TMB} can also be solved in polynomial time in the single-source setting. This turns out to be the only other measure for which the problem is tractable. We prove a stronger statement: that we can compute a TSOT $\T_r$ rooted at any vertex $r$ such that the LD distance from $r$ to any vertex is at least the minimum $\LD$ distance from $r$. 
We remark here an interesting difference with the work of ~\cite{branchings}: requiring to realize the $\LD$ distances from a root to \emph{all} the other nodes is $\NP$-hard, while requiring to realize only the worst-case $\LD$ distance is tractable.
\begin{theorem}\label{thm:LDpolytime}
    Given a temporal graph $\G=(G,\lambda,\tr)$ we can compute in polynomial-time a \textup{TSOT} $\T_r$, rooted at any vertex $r$, such that $\LD(r,v,\T_r)\geq \min_{w\in V(G)} \LD (r,w,\G)$, for all $v\in G$.
\end{theorem}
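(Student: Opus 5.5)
Let $T^\ast := \min_{w\in V(G)} \LD(r,w,\G)$. By definition, for every vertex $w$ there is a temporal path from $r$ to $w$ in $\G$ with departure time at least $T^\ast$. (We assume $r$ temporally reaches every vertex, as the paper does for feasibility.) The plan is to restrict $\G$ to the temporal edges that can be used by some $r$-path departing at time at least $T^\ast$, and then run the Earliest-Arrival TSOT algorithm of \cite{spanningtreesEA} on this restricted graph.

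\textbf{Step 1.} Compute $T^\ast$ in polynomial time by computing $\LD(r,w,\G)$ for every $w$; each such distance is polynomial-time computable, as recalled in Section~\ref{sec:pre}.

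\textbf{Step 2.} Build the temporal graph $\G'=(G,\lambda',\tr)$, where $\lambda'(e)=\{t\in\lambda(e) : t\geq T^\ast\}$. Every temporal path in $\G$ departing at time at least $T^\ast$ has all its labels at least $T^\ast$, because labels along a temporal path are non-decreasing (traversal times are nonnegative). Such a path is therefore a temporal path of $\G'$. Hence $r$ temporally reaches every vertex in $\G'$.

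\textbf{Step 3.} Apply the algorithm of \cite{spanningtreesEA} (Algorithm 2), already invoked for Theorem~\ref{thm:EApolytime}, to $\G'$ with root $r$. This yields a TSOT $\T_r$ whose edges and labels are taken from $\G'$, each edge carrying exactly one label, in which $r$ reaches every vertex. This is the one point that needs care: the cited result must produce a sub-temporal-graph of its input that spans all vertices whenever $r$ reaches all of them. If that black box is unwelcome, I would instead give a direct argument. Run a Dijkstra-like earliest-arrival search from $r$ in $\G'$, and for each vertex $v\neq r$ keep as parent the temporal edge that realized its earliest arrival. Because arrival times only increase along parent pointers, no cycles arise, so these parent edges form a spanning tree with one label per edge, and the path to each $v$ in this tree is a valid temporal path.

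\textbf{Step 4.} Take any $v\neq r$. Its unique tree path in $\T_r$ is a temporal path whose first label lies in $\lambda'$, so that label is at least $T^\ast$. Therefore $\LD(r,v,\T_r)\geq T^\ast$, as required. All steps run in polynomial time. I expect the main obstacle to be only the justification of Step 3, namely that the earliest-arrival tree stays inside the filtered label set and still reaches all vertices. The rest of the argument is routine monotonicity of labels along temporal paths.
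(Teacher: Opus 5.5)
Your proof is correct, and it takes a genuinely different route from the paper.

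\textbf{The paper's route.} The paper builds $\T_r$ incrementally. It repeatedly picks the vertex $u_i$ outside the current tree with the smallest value $d_{u_i}=\LD(r,u_i,\G)$ and grafts a latest-departure path to $u_i$ onto the tree. Whenever that path re-enters a vertex already in the tree, it keeps whichever incoming temporal edge arrives earlier. It then argues case by case that no vertex's departure time falls below $d_{u_1}$.

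\textbf{Your route.} You isolate the single fact that drives the result. Since $\tr\geq 0$, labels are non-decreasing along a temporal path. So discarding all labels below $T^\ast=d_{u_1}$ keeps every path that departs at time at least $T^\ast$, and hence $r$ still reaches every vertex. After that, \emph{any} temporal spanning out-tree of the filtered graph $\mathcal{G}'$ automatically has all departure times at least $T^\ast$.

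\textbf{Comparison.} Your argument makes explicit why the paper's invariant holds: the paper's tree, too, only ever uses labels from latest-departure paths departing at times $d_{u_i}\geq d_{u_1}$. Your version is shorter and far easier to check. It also shows that the particular tree-building rule is irrelevant: an earliest-arrival tree, or any reachability tree, of $\mathcal{G}'$ will do. The paper's route builds the tree directly from latest-departure paths without a separate spanning-tree subroutine, at the price of a more delicate invariant.

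\textbf{One small fix in your fallback argument.} The paper allows $\tr(e,t)=0$, so arrival times are only non-decreasing along parent pointers, not strictly increasing. If parents were chosen arbitrarily among temporal edges realizing the earliest arrival, two vertices with equal earliest-arrival times could point to each other. Justify acyclicity by the Dijkstra settling order instead: each vertex's parent is a vertex settled before it. The same induction on settling order shows that the tree path to each $v$ is a valid temporal path arriving at its earliest-arrival time.
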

\begin{proof}
    Let $r$ be any vertex in $G$ and let $d_u$ denote the latest departure value for a temporal path from $r$ to $u$, i.e. $d_u=\LD (r,u,\G)$. Each $d_u$ is computable in polynomial time by using the algorithm from~\cite{bentert}.

    We iteratively construct a TSOT $\T_r$ rooted at $r$. Initially, the tree $\T_0$ contains only the vertex $r$. At each step $i$, we select a vertex $u_i \notin \T_{i-1}$ whose value $d_{u_i}$ is minimum among all vertices not in $\T_{i-1}$. We then consider the latest departure path $P_i=(e_1,t_1),(e_2,t_2),\dots,(e_j,t_j)$ from $r$ to $u_i$, and we construct $\T_i$ by adding the edges and vertices of $P_i$ to $\T_{i-1}$ in such a way that the resulting underlying structure remains a tree rooted at $r$. In particular, every vertex of $\T_i$ (except the root $r$) has exactly one incident edge on the unique path from $r$ to that vertex. We now show how to maintain this property.
    
    If $V(P_i)\cap V(\T_{i-1})=\emptyset$, then we set $\T_i:=\T_{i-1}\cup P_i$. Observe that in this case $r$ can temporally reach every vertex of $P_i$ with departure time $d_{u_i}$.

    Otherwise, $P_i$ contains at least one vertex already in $\T_{i-1}$. We set $\T_i=\T_{i-1}$ and process the edges of $P_i$ in order. Let $(e_k=\{w_k,w_{k+1}\},t_k)$ denote the edge currently under consideration, where $k\in [j]$.

    If $w_{k+1}\not \in \T_{i-1}$ then we add $(e_k,t_k)$ to $\T_{i}$. Observe that $d_{u_1}\leq \LD(r,w_{k+1},\T_i)\leq  d_{u_i}$. 

    Otherwise, let $(f,t_f)$ be the unique edge in $\T_{i}$ incident to $w_{k+1}$ that lies on the unique path from $r$ to $w_{k+1}$. 
    
    If $t_k+\tr(e_k,t_k)\geq t_f+\tr(e_f,t_f)$, we discard $(e_k,t_k)$ and continue with the next edge. In this case, observe that $r$ can still reach $w_{k+1}$ in time to traverse $(e_{k+1},t_{k+1})$. Moreover, the LD distance from $r$ to $w_{k+1}$ remains the same as in $\T_{i-1}$. This discard could, in principle, affect the LD distance from $r$ to $u_i$, but this can occur only when processing an edge incident to $r$. In that case, note that the LD distance from $r$ to $u_i$ is no smaller than $\min_{v\in V(\T_{i-1})} \LD (r,v,\T_{i-1})$.

    The last case occurs when $t_k+\tr(e_k,t_k)<t_f+\tr(e_f,t_f)$. In this situation, we remove $(f,t_f)$ from $\T_i$ and insert $(e_k,t_k)$ instead, which then becomes the unique edge in $\T_i$ incident to $w_{k+1}$ on the path from $r$ to $w_{k+1}$. 
    Observe that the removal of $(f,t_f)$ can affect the LD distance from $r$ to every vertex in $\T_{i-1}$ whose unique path to $r$ goes through $w_{k+1}$. However, these vertices were already in $\T_{i-1}$ and therefore had LD distance from $r$ smaller than $d_{u_i}$.
    
    Finally, this procedure ensures that $\T_i$ remains a tree rooted at $r$ at every step, with every vertex (other than $r$) having exactly one incident edge on the unique path from the root. 
    
    We repeat this process until all vertices are included in the tree and denote with $\T_r$ the resulting TSOT. Observe that at each step the LD distance from $r$ to any processed vertex does not decrease below $d_{u_1}$. Hence, $\T_r$ satisfies $\LD(r,v,\T_r)\geq \min_{w\in V(G)} \LD (r,w,\G)$ for all $v\in V(G)$.
\end{proof}
Applying Theorem~\ref{thm:LDpolytime} to the full temporal graph of $G$ and $\tr$ we obtain the following corollary.
\begin{corollary}\label{cor:LDpolytime}
    The \textsc{LD-TMB} problem on the instance $(G, S,\tr,\mu)$ is solvable in polynomial time when $|S|=1$.
\end{corollary}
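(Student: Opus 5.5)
The plan is to follow the same route the paper uses for Theorem~\ref{thm:EApolytime}, with Theorem~\ref{thm:LDpolytime} playing the role of the TSOT construction of \cite{spanningtreesEA}. Let $(G,\{s\},\tr,\mu)$ be an instance of \textsc{LD-TMB}. We first build the full temporal graph $\G_{\tr}=(G,\lambda,\tr)$ with $\lambda(e)=[\tau]$ for every $e\in E$. This graph has polynomial size whenever $\tau$ is part of the input in the same encoding as $\mu$, since $\mu$ maps into $[\tau]$. In $\G_{\tr}$ we compute $d^*:=\min_{w\in V}\LD(s,w,\G_{\tr})$ in polynomial time, using \cite{bentert}. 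By the standing assumption that some labeling lets $s$ reach every vertex, every $s$ reaches every vertex in $\G_{\tr}$, so $d^*$ is well defined.

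The first key step is an upper bound. For \emph{any} feasible labeling $\lambda'$ with $|\lambda'(e)|\le\mu(e)$, every temporal path in $(G,\lambda',\tr)$ is also a temporal path in $\G_{\tr}$, because $\lambda'(e)\subseteq[\tau]=\lambda(e)$ and the traversal times are the same. Hence $\LD(s,v,(G,\lambda',\tr))\le \LD(s,v,\G_{\tr})$ for every $v$. Taking the minimum over $v$ shows that the objective value of any feasible solution is at most $d^*$.

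The second key step is a matching lower bound. Apply Theorem~\ref{thm:LDpolytime} to $\G_{\tr}$ with root $r=s$. This gives, in polynomial time, a TSOT $\T_s$ with $\LD(s,v,\T_s)\ge d^*$ for all $v$. Since $\T_s$ uses exactly one label per edge of a spanning tree (and none on the other edges), and $\mu(e)\ge 1$, its labeling is feasible. Also, $s$ temporally reaches every vertex in it. Its objective value is therefore at least $d^*$, and by the upper bound it equals $d^*$, so it is optimal.

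I expect the only delicate point to be the correctness of Theorem~\ref{thm:LDpolytime} itself, which may be assumed here. Beyond that, one should check that restricting a labeling to the tree edges keeps it a valid solution of the \TMB instance. It does, because the problem bounds only the number of labels per edge from above and imposes no lower bound.
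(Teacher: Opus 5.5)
Your proposal is correct and matches the paper's proof, which is the one-line argument of applying Theorem~\ref{thm:LDpolytime} to the full temporal graph $\G_{\tr}$ rooted at $s$. You additionally make explicit the step the paper leaves implicit: every feasible labeling's temporal paths are also temporal paths of $\G_{\tr}$, so no solution can exceed $\min_{w}\LD(s,w,\G_{\tr})$, and this is what makes the TSOT optimal.
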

\begin{proof}
    The claim follows by applying Theorem~\ref{thm:LDpolytime} to the full temporal graph $\G_{\tr}=(G,\lambda,\tr)$ of $G$ and $\tr$.
\end{proof}
On the other hand, for $\D\in\{\FT,\ST,\MH,\MW\}$, $\D$-\TMB is \NP-complete. 
\begin{theorem}\label{thm:sshard}
    Let $\D\in\{\FT,\ST,\MH,\MW\}$. It is NP-complete to decide whether $\D$-\TMB with a single source admits a feasible solution of a given value.
\end{theorem}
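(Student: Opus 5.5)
Membership in \NP\ is routine. A certificate is the labeling $\lambda$ itself. We may assume $\tau$ is polynomially bounded, or encode each label in binary; then $\lambda$ has at most $\sum_e \mu(e)$ labels, and we can moreover restrict to one label per edge, since a spanning out-tree already suffices for reachability. Given $\lambda$, we check the multiplicity constraints and compute $\D(s,v,(G,\lambda,\tr))$ for every $v$ in polynomial time, using the algorithms cited from \cite{branchings}. Finally we compare the maximum against the target value.

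For hardness I would reduce from 3-\textsc{SAT}, with a single gadget construction reused across the four distances and small changes to $\tr$ and the target value. Take the source $s$. For each variable $x_i$ build two parallel "literal vertices" $x_i^T$ and $x_i^F$. Each is attached to a variable hub $h_i$, which is reached from $s$ through a chain. Each clause $C_j$ gets a clause vertex $c_j$, adjacent to the literal vertices of its three literals. Set $\mu\equiv 1$, so every edge receives exactly one label; this is the source of the combinatorial choice. The traversal function $\tr(e,t)$ is designed so that each edge has essentially two meaningful time slots: a \emph{fast} one that keeps the distance small, and a \emph{slow} one (large traversal time, or a late time forcing waiting). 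The hub forces a choice: $h_i$ can reach $x_i^T$ cheaply or $x_i^F$ cheaply, but not both. I would enforce this by routing both through a shared edge whose single label can be compatible with only one of two time windows, say windows $[1,2]$ versus $[3,4]$. A literal vertex that was reached cheaply can then forward cheaply to the clauses it satisfies. A clause vertex reached only via unchosen literals incurs an extra cost.

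Concretely, for $\MH$ and $\ST$ I would aim for a gap of value $k$ versus $2k$ (say $3$ versus $6$). A satisfying assignment gives every vertex a temporal path of at most $3$ hops (or traversal $3$). If some clause is unsatisfied, its clause vertex must be reached by a detour of about double length, whose other edges carry labels fixed at incompatible times. This yields the factor-2 inapproximability promised in the introduction. For $\FT$ and $\MW$, the unchosen option is made to cost a huge waiting time or duration: for instance, the label of an unchosen literal edge must be placed at time roughly $2^{n}$, or at $W$ with $W$ chosen as large as the input weights allow. A satisfiable formula then gives value $0$ or a small constant, and an unsatisfiable one forces an enormous value, producing the exponential gap. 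The $\MW$ case with value $0$ against value $\ge 1$ is the cleanest: value $0$ means every vertex has a path with no idling, which pins down the labels rigidly.

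The main obstacle is soundness. In the forward direction, an assignment yields labels directly. In the reverse direction we must show that \emph{any} labeling achieving the target value induces a consistent assignment. There are two dangers. First, clause vertices might be exploited as shortcuts to reach the other literal vertex of a variable. Second, labels on clause edges might be chosen to serve two literals at once. I would try to forbid both by making clause-to-literal edges usable only in the direction "literal to clause" within the target bound, by giving them traversal times that make any continuation exceed the budget. The variable choice then reduces to the single shared edge label per variable. I would verify this first for $\MW$ with target $0$, then adapt $\tr$ for the other three measures.
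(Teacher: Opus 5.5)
Your proposal has a genuine gap. You never fix a concrete $\tr$, and the gadget you sketch fails for three of the four measures.

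\textbf{The choice gadget.} You place the variable choice at the literal vertices themselves (``$h_i$ can reach $x_i^T$ cheaply or $x_i^F$ cheaply, but not both''). You implement it as a single arrival time at $h_i$ followed by two time windows.

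For $\ST$ and $\MH$ waiting costs nothing, so this choice is never enforced. Give the shared edge its early label. A path that reaches $h_i$ early can then wait for the fast slot of the later window with no extra hops or traversal time. Both literal vertices, and hence all clause vertices, become cheap, so unsatisfiable formulas map to yes-instances. Retuning the windows cannot fix this. As long as both options cost the same up to $h_i$, an early arrival dominates a late one when waiting is free.

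For $\FT$ the choice is enforced, but completeness fails, because the objective is a maximum over \emph{all} vertices. In a yes-instance the unchosen literal vertex must also meet the target. In your gadget it can be reached only by waiting at $h_i$ for its own window, by a traversal of cost $\tau$, or through a clause vertex you intend to make useless. So its duration is already about the window separation, and the gap disappears.

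Only for $\MW$ does your template look salvageable. There a final edge with traversal $\tau$ adds no waiting, so the unchosen literal can still be reached at cost $0$.

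\textbf{The missing idea.} The paper keeps every non-clause vertex cheap under \emph{any} assignment and encodes all the information on the clause edges.
\begin{itemize}
\item Each variable has two entry routes into $v_{x_i,in}$: through $v_{x_i}$ at early times, and through $v_{\overline{x_i}}$ at times shifted by $a$.
\item The single multiplicity-one edge $\{v_{x_i,in},v_{x_i,out}\}$ follows. Its label, $3$ or $a+3$, is the truth value.
\item Clause vertices hang off $v_{x_i,out}$ by edges whose only cheap time, $4$ or $a+4$, encodes the literal's polarity.
\item These fixed cheap times, with traversal $\tau$ at all other times, also make clause vertices dead ends within the budget.
\end{itemize}
Since $\tr$ does not depend on direction, you cannot ``orient'' literal--clause edges by traversal times; the orientation has to come from timing.

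For $\ST$ and $\MH$ the gap comes from placing costs asymmetrically, not from time separation. The positive route pays $a$ at the entry: $\tr(\{s,v_{x_i}\},1)=a$ for $\ST$, or a chain of $a-1$ extra vertices for $\MH$. The negative route pays $a$ at the clause exit. Combining the early label with a negative literal therefore pays both. This gives $a+3$ versus $2a+2$ for $\ST$ and $a+3$ versus $2a+1$ for $\MH$. The factor $2$ is approached only as $a\to\infty$; it is not realized as $3$ versus $6$. For $\MH$ and $\MW$ the clause edges are also subdivided, because under these measures the last edge of a path can be taken at any time without penalty.

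\textbf{Membership in \NP.} Restricting the certificate to one label per edge does not preserve the objective, since a vertex may need to forward at two different times. Use instead the union of one optimal path per vertex, which has $O(n^2)$ temporal edges.
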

\begin{proof}
    We first present a reduction from \textsc{SAT} to $\FT$-\TMB, and then outline the modifications needed to establish the result for the other measures.

    Given an instance of \textsc{SAT}, $\phi = \bigwedge_{i \in [q]} C_i$ over variables $\mathcal{X}=\{x_1, x_2, \dots, x_p\}$, we construct a corresponding instance $(G,S,\tr,\mu)$ for $\FT$-\TMB as follows. 
    
    The graph $G$ contains a vertex $s$ that is the only source in $S$; for each variable $x_i$, it contains four vertices $v_{x_i}, v_{\overline{x_i}}, v_{x_i,in}, v_{x_i,out}$, and, for each clause $C_j$, it contains one vertex $v_{c_j}$.
    We add the following edges. The vertex $s$ is adjacent to all vertices of the form $v_{x_i}, v_{\overline{x_i}}$ for $i\in [p]$. The vertex $v_{x_i,in}$ has an edge to $v_{x_i}, v_{\overline{x_i}}$, and $v_{x_i,out}$. Finally, for $j\in [q]$, the vertex $v_{c_j}$ shares an edge with $v_{x_i,out}$ if $C_j$ contains the literal $x_i$ or $\overline{x_i}$. We set $\mu(e)=1$ for all edges $e$ of the form $\{v_{x_i,in},v_{x_i,out}\}$, and leave all other edges with arbitrary multiplicity.

    Let $a$ be a fixed integer with $a\geq 1$. Let $i\in [p]$ and $j\in [q]$, we construct the traversal function $\tr$ as follows. For all $x_i \in \mathcal{X}$ we set the following values for the traversal function:
\begin{itemize}
        \item for each $(e,t)\in\{(\{s,v_{x_i}\},1)$, $(\{s,v_{\overline{x_i}}\},a+1)$, $(\{v_{x_i},v_{x_{i,in}}\},2)$,\\ 
        $(\{v_{\overline{x_i}},v_{x_{i,in}}\},a +2)$,
$(\{v_{x_{i,in}},v_{x_{i,out}}\},3)$, and $(\{v_{x_{i,in}},v_{x_{i,out}}\},a + 3)\}$, $\tr(e,t)=1$;
        \item $\tr(\{v_{x_{i,out}}\,v_{c_j}\},4)=1$ if $x_i\in C_j$;
        \item $\tr(\{v_{x_{i,out}}\,v_{c_j}\},a +4)=1$ if $\overline{x_i}\in C_j$;
        \item For every other edge in $G$ and time not previously assigned, we set the traversal function $\tr$ to the value $\tau$. 
\end{itemize}

\begin{figure}[t]
\begin{center}
\resizebox{0.6\columnwidth}{!}{
\begin{tikzpicture}[]
  \tikzstyle{every node}=[inner sep=2pt,circle,black,fill=darkgray]
  
  \path (0,0) node (s){};
  \path (2,1) node (xi){};
  \path (2,-1) node (xibar){};
  \path (4,0) node (xiin){};
  \path (6,0) node (xiout){};

  \tikzstyle{every node}=[]
  \path (s) node[left=2pt,font=\Large]{$s$};
  \path (xi) node[above=2pt,font=\Large]{$v_{x_i}$};
  \path (xibar) node[below=2pt,font=\Large]{$v_{\overline{x_i}}$};
  \path (xiin) node[above=2pt,font=\Large]{$v_{x_{i,in}}$};
  \path (xiout)+(-.2,0) node[above=2pt,font=\Large]{$v_{x_{i,out}}$};
  \path (10,1) node[font=\Large]{$\{v_{c_j}:x_i \in C_j\}$};
  \path (10,-1) node[font=\Large]{$\{v_{c_j}:\overline{x_i} \in C_j\}$};

  \draw (s)-- node[above,pos=.5,font=\Large]{$1$} (xi);
  \draw (s)-- node[below=3.5pt,pos=.5,font=\Large]{$a + 1$} (xibar);
  \draw (xi)-- node[above,pos=.5,font=\Large]{$2$} (xiin);
  \draw (xibar)-- node[below=2pt,pos=.5,font=\Large]{$a +2$} (xiin);
  \draw (xiin)-- node[above,pos=.5,font=\Large]{$3$} node[below,pos=.5,font=\Large]{$a + 3$} (xiout);
  \draw[draw=black] (8,.5) rectangle ++(4,1);
  \draw[draw=black] (8,-1.5) rectangle ++(4,1);
  \draw (xiout)-- (8,.75);
  \draw (xiout)-- (8,1);
  \draw (xiout)-- node[above,pos=.3,font=\Large]{$4$} (8,1.25);
  \draw (xiout)-- (8,-.75);
  \draw (xiout)-- (8,-1);
  \draw (xiout)-- node[below=4pt,pos=.3,font=\Large]{$a + 4$} (8,-1.25);
  \draw (s)--(.5,.5);
  \draw (s)--(.5,.75);
  \draw (s)--(.5,-.5);
  \draw (s)--(.5,-.75);

\end{tikzpicture}
}
\caption{\label{fig:reduc_ss_ft}Illustration of reduction for $\D=\FT$ in Theorem~\ref{thm:sshard}. The number $x$ near an edge $e$ indicates that $\tr(e,x)=1$.}
\end{center}
\end{figure}
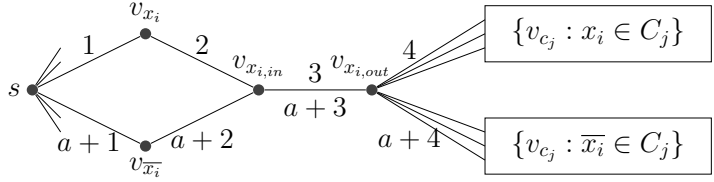
Refer to the Figure~\ref{fig:reduc_ss_ft} for an illustration of the construction. The integer $x$ near an edge $e$ indicates that $\tr(e,x)=1$. Observe that we can choose only one label for the edges of the form $\{v_{x_i,in}, v_{x_i,out}\}$, and this label corresponds to a truth value for the variable $x_i$. In particular, if we choose label 3, then $s$ can reach all clause-vertices that contain $x_i$ as a positive literal along a path of duration 4; otherwise, if we choose label $a + 3$, then $s$ can reach all clause-vertices that contain $\overline{x_i}$ as a literal along a path of duration 4.

This last observation implies that, for every $j \in [q]$, $s$ can reach $v_{c_j}$ along a path of duration 4 (using one of the two labels of the edge $\{v_{x_i,in}, v_{x_i,out}\}$). We show that the instance admits a labeling $\lambda$ with $\max_{v\in V}\FT(s,v,(G,\lambda,\tr)) = 4$ if and only if $\phi$ is satisfiable.

\begin{lemma}\label{lem:lemma2}
    \textsc{FT}-\TMB on the constructed instance $(G,\{s\},\tr, \mu)$ admits a labeling $\lambda$ such that $\max_{v\in V} \FT (s,v,(G,\lambda,\tr)) = 4$ if and only if $\phi$ is satisfiable.
\end{lemma}
\begin{proof}
We first show that if $\phi$ is satisfiable, then the instance $(G,\{s\},\tr,\mu)$ of \textsc{FT}-\TMB admits a labeling $\lambda$ with $\max_{v\in V} \FT(s,v,(G,\lambda,\tr)) = 4$.

    Let $\alpha$ be an assignment of truth values to $\mathcal{X}$ that satisfies $\phi$. Recall that we only need to choose the value of $\lambda$ for the edges of the form $\{v_{x_i,in}, v_{x_i,out}\}$ with $i\in [p]$, since including a temporal edge for which the traversal function evaluates to $\tau$ would exceed the maximum duration. If $\alpha(x_i)=1$, we set $\lambda(\{v_{x_i,in}, v_{x_i,out}\})=\{3\}$; otherwise, if $\alpha(x_i)=0$, we set $\lambda(\{v_{x_i,in}, v_{x_i,out}\})=\{a+3\}$.
        
    It is straightforward to verify that, with this $\lambda$, we have that $\max_{v\in V}\FT(s,v,(G,\lambda,\tr)) = 4$.
    In particular, let $C_j$ be any clause and let $x_i$ be the variable corresponding to a literal that evaluates to true in $C_j$ under $\alpha$. If $C_j$ contains the positive literal $x_i$, then $\lambda(\{v_{x_i,in}, v_{x_i,out}\}) = \{3\}$ and $s$ can reach $v_{c_j}$ through the temporal path $(\{s,v_{x_i}\},1),(\{v_{x_i},v_{x_i,in}\},2),(\{v_{x_i,in},v_{x_i,out}\},3)$,\\$(\{v_{x_i,out},v_{c_j}\},4)$ of duration 4. Otherwise, $C_j$ contains the literal $\overline{x_i}$, thus we have $\lambda(\{v_{x_i,in}, v_{x_i,out}\}) = \{a+3\}$ and $s$ can reach $v_{c_j}$ through the path $\{s,v_{\overline{x_i}}\},\{v_{\overline{x_i}},v_{x_i,in}\},\{v_{x_i,in},v_{x_i,out}\},\{v_{x_i,out},v_{c_j}\}$ with labels $a+1,a+2,a+3,a+4$, again of duration 4.

We now show that if \textsc{FT}-\TMB on the constructed instance $(G,\{s\},\tr, \mu)$ admits a labeling $\lambda$ such that $\max_{v\in V} \FT (s,v,(G,\lambda,\tr)) = 4$, then $\phi$ is satisfiable.

    By contradiction, assume that \textsc{FT}-\TMB on $(G,\{s\},\tr,\mu)$ admits a labeling $\lambda$ such that $\max_{v\in V}\FT (s,v,(G,\lambda,\tr)) = 4$ and that $\phi$ is not satisfiable. 

    Let $\alpha$ be the assignment of truth values to $\mathcal{X}$ corresponding to $\lambda$, that is, $\alpha(x_i) = 1$ if $\lambda(\{v_{x_i,in}, v_{x_i,out}\}) = \{3\}$, and $\alpha(x_i) = 0$ if $\lambda(\{v_{x_i,in}, v_{x_i,out}\}) = \{a+3\}$. Since $\phi$ is not satisfiable, there exists a clause $C_j$ that evaluates to false under $\alpha$. 

    Let $l$ be a literal in $C_j$ and let $x_i$ be its corresponding variable. If $\alpha(x_i)=1$, then $l$ must be a negative literal, and thus $\lambda(\{v_{x_i,out},v_{c_j}\})=\{a+4\}$ while $\lambda(\{v_{x_i,in}, v_{x_i,out}\}) = {3}$. This implies that the only path from $s$ to $v_{c_j}$ traversing the vertex $v_{x_i,in}$ is $(\{s,v_{x_i}\},1),(\{v_{x_i},v_{x_i,in}\},2)$,\\$(\{v_{x_i,in},v_{x_i,out}\},3),(\{v_{x_i,out},v_{c_j}\},a+4)$, which has duration $a+4>4$.
       
    On the other hand, if $\alpha(x_i)=0$, then $l$ must be a positive literal, and thus $\lambda(\{v_{x_i,out},v_{c_j}\})=\{4\}$ while $\lambda(\{v_{x_i,in}, v_{x_i,out}\}) = {a+3}$. Therefore, any temporal path from $s$ to $v_{c_j}$ traversing $v_{x_i,in}$ must use a temporal edge for which the traversal function is equal to $\tau$.
        
    Since we have shown that for any literal in $C_j$ the vertex $v_{c_j}$ cannot be reached from $s$ via a path of duration $4$, we obtain a contradiction to the existence of such a labeling $\lambda$.
\end{proof}

The previous lemma complete the proof for the case $\D=\FT$. Observe that Lemma~\ref{lem:lemma2} implies that the value of $\FT$-\TMB problem on an instance corresponding to an unsatisfiable SAT instance is at least $a+4$.

The constructions for $\D\in\{\ST,\MH,\MW\}$ are illustrated in Figure~\ref{fig:reductions1source}. For $\D = \ST$, the pair of integers $(x,y)$ near an edge $e$ indicates that $\tr(e,x)=y$; in the other figures, we omit $y$ since it is assumed to be~$1$.  Note that, for $\D\in\{\MH,\MW\}$, we introduce additional clause-vertices, since the traversal function of an edge is defined over all time values, allowing us to traverse an edge at any time. However, these temporal edges can only be traversed at the end of a temporal path as they have a traversal value equal to $\tau$, and the labeling function is not defined for values greater than $\tau$. 
Moreover, for $\D\in\{\ST,\MH,\MW\}$, the integer $a$ is assumed to satisfy $a\geq 2,3,1$, respectively. Additionally, for $\D=\MW$, the integer $b$ is assumed to satisfy $b\geq 2$.

As the proofs for these cases are very similar to the case of $\D=\FT$, we omit the details. We observe that the $\D$-\TMB instances corresponding to satisfiable SAT instances have value $a+3$, $a+3$, and $a$, for $\D$ equal to $\ST$, $\MH$, and $\MW$, respectively. While the instances corresponding to unsatisfiable SAT instances have value at least $2a+2$, $2a+1$, and $a(b+1)$, respectively.

\begin{figure*}[ht]
\begin{center}
\resizebox{1\textwidth}{!}{
\begin{tikzpicture}[xscale=0.6]
  \tikzstyle{every node}=[inner sep=2pt,circle,black,fill=darkgray]
  
  \path (0,0) node (s){};
  \path (-1,-2) node (xi){};
  \path (1,-2) node (xibar){};
  \path (0,-4) node (xiin){};
  \path (0,-6) node (xiout){};

  \tikzstyle{every node}=[]
  \draw (s)--(-.5,-.5);
  \draw (s)--(-.75,-.5);
  \draw (s)--(.5,-.5);
  \draw (s)--(.75,-.5);
  \path (s) node[above=2pt]{$s$};
  \path (xi) node[left=2pt]{$v_{x_i}$};
  \path (xibar) node[right=2pt]{$v_{\overline{x_i}}$};
  \path (xiin) node[right=2pt]{$v_{x_{i,in}}$};
  \path (xiout) node[right]{$v_{x_{i,out}}$};

  \draw (s)-- node[left,pos=.5]{$(1,a)$} (xi);
  \draw (s)-- node[right,pos=.5]{$(a+1,1)$} (xibar);
  \draw (xi)-- node[left,pos=.5]{$(a+1,1)$} (xiin);
  \draw (xibar)-- node[right,pos=.5]{$(a+2,1)$} (xiin);
  \draw (xiin)-- node[left,pos=.5]{$(a+2,1)$} node[right,pos=.5]{$(a+3,1)$} (xiout);
  
  \draw[draw=black] (-3.5,-8) rectangle ++(3.25,.6);
  \draw[draw=black] (0.25,-8) rectangle ++(3.25,.6);
  
  \draw (xiout)-- node[left=2pt,pos=.5]{$(a+3,1)$} (-2,-7.4);
  \draw (xiout)-- (-1.5,-7.4);
  \draw (xiout)--  (-1,-7.4);
  \draw (xiout)-- (1,-7.4);
  \draw (xiout)-- (1.5,-7.4);
  \draw (xiout)-- node[right=2pt,pos=.5]{$(a+4,a)$} (2,-7.4);
  
  \path (-1.87,-7.72) node[]{\scriptsize $\{v_{c_j}:x_i \in C_j\}$};
  \path (1.87,-7.72) node[]{\scriptsize $\{v_{c_j}:\overline{x_i} \in C_j\}$};

\end{tikzpicture}

\begin{tikzpicture}[xscale=0.6]
  \tikzstyle{every node}=[inner sep=2pt,circle,black,fill=darkgray]
  
  \path (0,0) node (s){};
  \path (-1,-1.3) node (xi1){};
  \path (-1,-2.65) node (xi2){};
  \path (1,-2) node (xibar){};
  \path (0,-4) node (xiin){};
  \path (0,-6) node (xiout){};
  \path (-2,-7.4) node (xcileft1){};
  \path (-1.5,-7.4) node (xcileft2){};
  \path (-1,-7.4) node (xcileft3){};
  \path (1,-7.4) node (xciright1){};
  \path (1.5,-7.4) node (xciright2){};
  \path (2,-7.4) node (xciright3){};
  \path (1,-8.4) node (xciright4){};
  \path (1.5,-8.4) node (xciright5){};
  \path (2,-8.4) node (xciright6){};

  \tikzstyle{every node}=[]
  \draw (s)--(-.5,-.5);
  \draw (s)--(-.75,-.5);
  \draw (s)--(.5,-.5);
  \draw (s)--(.75,-.5);
  \path (s) node[above=2pt]{$s$};
  \path (xi1) node[left=2pt]{$v_{x_{i,1}}$};
  \path (xi2) node[left=2pt]{$v_{x_{i,a-1}}$};
  \path (xibar) node[right=2pt]{$v_{\overline{x_i}}$};
  \path (xiin) node[right=2pt]{$v_{x_{i,in}}$};
  \path (xiout) node[right]{$v_{x_{i,out}}$};
  \path (-1.87,-9.72) node[]{\scriptsize $\{v_{c_j}:x_i \in C_j\}$};
  \path (1.87,-9.72) node[]{\scriptsize $\{v_{c_j}:\overline{x_i} \in C_j\}$};
  
  \draw (s)-- node[left,pos=.7]{$1$} (xi1);
  \draw (s)-- node[right,pos=.5]{$a$} (xibar);
  \draw[dashed] (-1,-1.55) -- (-1,-2.45);
  \draw (xi2)-- node[left,pos=.5]{$a$} (xiin);
  \draw (xibar)-- node[right,pos=.5]{$a+1$} (xiin);
  \draw (xiin)-- node[left,pos=.5]{$a+1$} node[right,pos=.5]{$a+2$} (xiout);
  
  \draw[draw=black] (-3.5,-10) rectangle ++(3.25,.6);
  \draw[draw=black] (0.25,-10) rectangle ++(3.25,.6);
   
  \draw (xiout)-- node[left=2pt,pos=.5]{$a+2$} (xcileft1);
  \draw (xiout)-- (xcileft2);
  \draw (xiout)--  (xcileft3);
  \draw (xcileft1)-- node[left=2pt,pos=.5]{$a+3$} (-2,-9.4);
  \draw (xcileft2)-- (-1.5,-9.4);
  \draw (xcileft3)-- (-1,-9.4);

  \draw (xiout)-- (xciright1);
  \draw (xiout)-- (xciright2);
  \draw (xiout)-- node[right=2pt,pos=.5]{$a+3$} (xciright3);
  \draw[dashed] (1,-7.6)-- (1,-8.2);
  \draw[dashed] (1.5,-7.6)-- (1.5,-8.2);
  \draw[dashed] (2,-7.6)-- (2,-8.2);
  \draw (xciright4)-- (1,-9.4) ;
  \draw (xciright5)-- (1.5,-9.4);
  \draw (xciright6)-- node[right=-2pt,pos=.5]{$2a+2$} (2,-9.4);
\end{tikzpicture}

\begin{tikzpicture}[xscale=0.6]
  \tikzstyle{every node}=[inner sep=2pt,circle,black,fill=darkgray]
  
  \path (0,0) node (s){};
  \path (-1,-2) node (xi){};
  \path (1,-2) node (xibar){};
  \path (0,-4) node (xiin){};
  \path (0,-6) node (xiout){};
  \path (-2,-7.4) node (xcileft1){};
  \path (-1.5,-7.4) node (xcileft2){};
  \path (-1,-7.4) node (xcileft3){};
  \path (1,-7.4) node (xciright1){};
  \path (1.5,-7.4) node (xciright2){};
  \path (2,-7.4) node (xciright3){};
  
  \tikzstyle{every node}=[]
  \draw (s)--(-.5,-.5);
  \draw (s)--(-.75,-.5);
  \draw (s)--(.5,-.5);
  \draw (s)--(.75,-.5);
  \path (s) node[above=2pt]{$s$};
  \path (xi) node[left=2pt]{$v_{x_i}$};
  \path (xibar) node[right=2pt]{$v_{\overline{x_i}}$};
  \path (xiin) node[right=2pt]{$v_{x_{i,in}}$};
  \path (xiout) node[right]{$v_{x_{i,out}}$};
  \path (-1.87,-9.72) node[]{\scriptsize $\{v_{c_j}:x_i \in C_j\}$};
  \path (1.87,-9.72) node[]{\scriptsize $\{v_{c_j}:\overline{x_i} \in C_j\}$};

  \draw (s)-- node[left,pos=.5]{$1$} (xi);
  \draw (s)-- node[right,pos=.5]{$ba+1$} (xibar);
  \draw (xi)-- node[left,pos=.5]{$2$} (xiin);
  \draw (xibar)-- node[right,pos=.5]{$ba+2$} (xiin);
  \draw (xiin)-- node[left,pos=.5]{$3$} node[right,pos=.5]{$ba+3$} (xiout);
  
  \draw[draw=black] (-3.5,-10) rectangle ++(3.25,.6);
  \draw[draw=black] (0.25,-10) rectangle ++(3.25,.6);
  
  \draw (xiout)-- node[left=2pt,pos=.5]{$a+4$} (xcileft1);
  \draw (xiout)-- (xcileft2);
  \draw (xiout)--  (xcileft3);
  \draw (xcileft1)-- node[left=2pt,pos=.5]{$a+5$} (-2,-9.4);
  \draw (xcileft2)-- (-1.5,-9.4);
  \draw (xcileft3)-- (-1,-9.4);
  
  \draw (xiout)-- (xciright1);
  \draw (xiout)-- (xciright2);
  \draw (xiout)-- node[right=2pt,pos=.5]{$ba+a+4$} (xciright3);
  \draw (xciright1)-- (1,-9.4) ;
  \draw (xciright2)-- (1.5,-9.4);
  \draw (xciright3)-- node[right=-3pt,pos=.5]{$ba+a+5$} (2,-9.4);

\end{tikzpicture}
}
\caption{Illustration of the necessary modifications in the reduction for $\D \in \{\ST, \MH, \MW\}$ (from left to right, respectively) in Theorem~\ref{thm:sshard}. Here, $a$ denotes an integer satisfying $a \geq 2,3,1$, respectively, and $b\geq 2$.}
\label{fig:reductions1source}
\end{center}

\end{figure*}
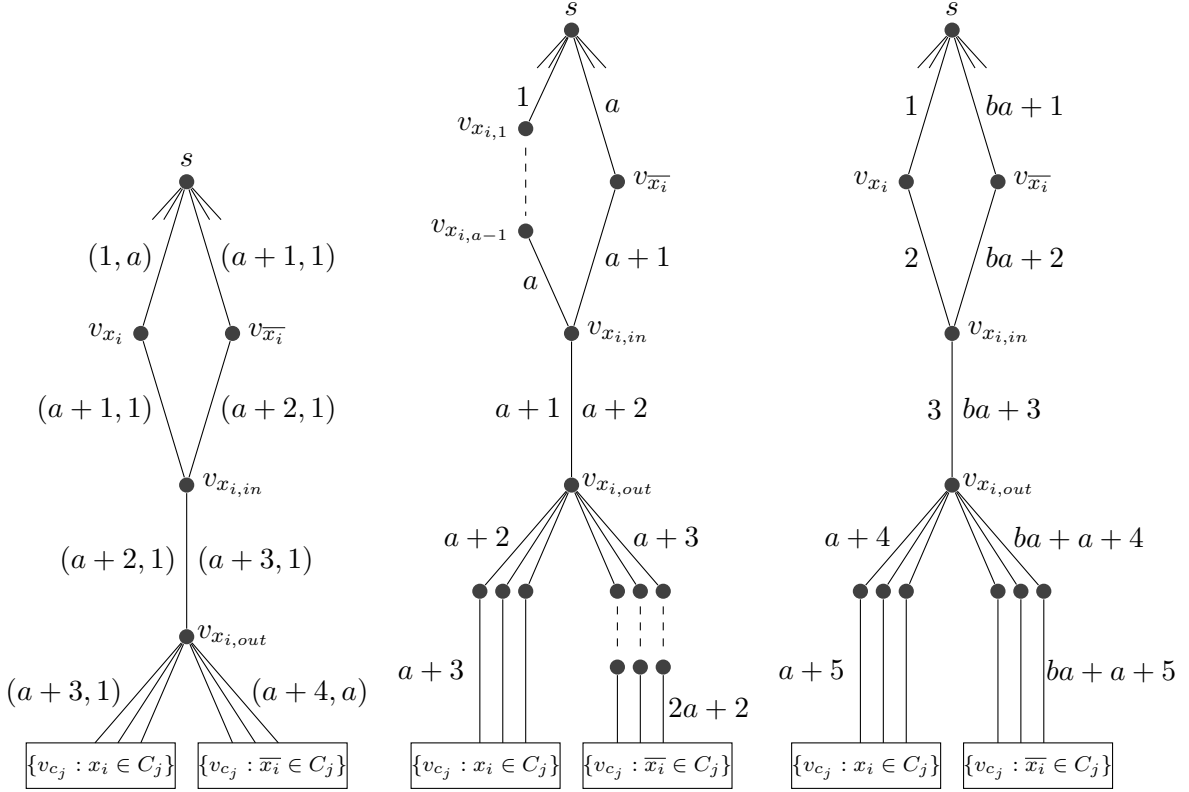

\end{proof}

The reduction in Theorem~\ref{thm:sshard} implies strong lower bounds on the approximation of $\D$-\TMB for $\D \in \{\FT,\MW\}$. To quantify such bounds, we need some further notation. For an instance $(G,\{s\},\tr,\mu)$ of $\D$-\TMB, let $\G_{\tr}=(G,\lambda,\tr)$ the full temporal graph of $G$ and $\tr$.
We define the \emph{minimum} and \emph{maximum shortest-traveling time from $s$ in $\G_{\tr}$} as $\FT_{\min}:=\max_{v\in V}\min\{\dur(P) \mid P\in\mathcal{P}(s,v)\}$ and $\FT_{\max}:=\max_{v\in V}\max\{\dur(P) \mid P\in\mathcal{P}(s,v)\}$, respectively.
Intuitively, $\FT_{\min}$ and $\FT_{\max}$ represent the minimum and maximum traveling time from $s$ to any node of $G$ if we do not consider the multiplicity constraints, respectively. In a similar way, we define the \emph{minimum} and \emph{maximum waiting time from $s$} as $\MW_{\min}:=\max_{v\in V}\min\{\wait(P) \mid P\in\mathcal{P}(s,v)\}$ and $\MW_{\max}:=\max_{v\in V}\max\{\wait(P) \mid P\in\mathcal{P}(s,v)\}$, respectively.

\begin{corollary}\label{cor:ssnoapprox}
For $\D \in \{\FT,\MW\}$, the $\D$-\TMB problem with a single source admits no approximation algorithm with approximation ratio smaller than $\FT_{\max}/\FT_{\min}$ and $\MW_{\max}/\MW_{\min}$, respectively, or smaller than $2^{|V|^{O(1)}}$, unless $\textsc{P}=\NP$.
\end{corollary}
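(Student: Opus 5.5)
The plan is a standard gap argument built on the reductions in the proof of Theorem~\ref{thm:sshard}. We exploit that the parameter $a$ (and $b$ for $\MW$) is free and can be taken exponentially large while keeping the encoding size polynomial. Suppose an algorithm $A$ has approximation ratio $\rho$. Take a SAT formula $\phi$ and build the $\FT$-\TMB instance of Theorem~\ref{thm:sshard} with a chosen $a$. By Lemma~\ref{lem:lemma2}, a satisfiable $\phi$ gives optimum $4$, and an unsatisfiable $\phi$ gives optimum at least $a+4$. If $\rho<(a+4)/4$, then running $A$ and checking whether its output value is below $a+4$ decides SAT. For $\MW$ the analogous values are $a$ (satisfiable) and $a(b+1)$ (unsatisfiable), so any $\rho<b+1$ decides SAT.

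\textbf{Exponential bound.} The labels and traversal values are encoded in binary, with $\tau$ polynomially bounded in bit-length. We may therefore choose $a=2^{|V|^{c}}$ for any constant $c$ (and similarly $b$ for $\MW$), since $\log a$ stays polynomial in the instance size. The gap is then $(a+4)/4$, respectively $b+1$, which is $2^{|V|^{O(1)}}$. One point needs checking: setting the default traversal value to $\tau$ must still exceed every relevant threshold. This holds because $\tau$ is at least the largest label used, which is of order $a$ (or $ba$).

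\textbf{Ratio in terms of $\FT_{\max}/\FT_{\min}$.} We compute these quantities on the full temporal graph $\G_{\tr}$ of the constructed instance.
\begin{itemize}
\item For $\FT$, without multiplicity constraints both labels $3$ and $a+3$ are available on $\{v_{x_i,in},v_{x_i,out}\}$. Every clause vertex is then reachable with duration $4$, and every other vertex with smaller duration, so $\FT_{\min}=4$.
\item The worst duration that matters is the mixed path using label $3$ and then $a+4$, of duration $a+4$. This path is exactly the one an unsatisfiable instance is forced to use.
\end{itemize}
The hard part is the upper quantity. As literally defined, $\FT_{\max}$ also counts paths through edges of traversal value $\tau$, which could make it larger than $a+4$. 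I would handle this in one of two ways. The first option is to restrict $\tr$ to be defined (finite) only at the listed times, treating other edge–time pairs as unavailable. The second is to interpret $\FT_{\max}$ over paths that make up candidate solutions of a fixed structure. In either case $\FT_{\max}=a+4$, and the gap $(a+4)/4$ equals $\FT_{\max}/\FT_{\min}$.

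For $\MW$ I would argue the same way: $\MW_{\min}=a$ from the consistent paths, and $\MW_{\max}=a(b+1)$ from the inconsistent ones. Matching $\FT_{\max}$ and $\MW_{\max}$ exactly to the unsatisfiable-case values is the step I expect to need the most care. I would first try to modify the gadget so that the $\tau$-edges become unusable; an alternative is to pad $\tau$ so that they dominate consistently. With the gaps in place, any ratio smaller than $\FT_{\max}/\FT_{\min}$ (respectively $\MW_{\max}/\MW_{\min}$), or smaller than $2^{|V|^{O(1)}}$, would decide SAT in polynomial time, which implies $\textsc{P}=\NP$.
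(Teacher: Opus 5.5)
Your argument is the same as the paper's. You use the gap from the reduction of Theorem~\ref{thm:sshard} (optimum $4$ versus at least $a+4$ for $\FT$, and $a$ versus at least $a(b+1)$ for $\MW$), identify these values with $\FT_{\min},\FT_{\max}$ and $\MW_{\min},\MW_{\max}$, and take $a$ (resp.\ $b$) of order $2^{|V|^{O(1)}}$ under a succinct encoding of $\tr$.

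The subtlety you flag is real, and the paper's proof does not address it: it simply asserts $\FT_{\max}=a+4$ and $\MW_{\max}=a(b+1)$. A $\tau$-valued pair may serve as the last edge of a path. So the path $(\{s,v_{x_i}\},1),(\{v_{x_i},v_{x_i,in}\},\tau)$ of the full temporal graph has duration $2\tau-1\geq 2a+7$ in the $\FT$ gadget and waiting time $\tau-2\geq ba+a+3$ in the $\MW$ gadget. Under the literal definitions, both quantities therefore exceed the NO-values, and the reduction only gives the thresholds $(a+4)/4$ and $b+1$.

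The first clause thus holds only under your second reading, where $\FT_{\max}$ and $\MW_{\max}$ range over paths using the designated traversal-$1$ pairs. This is what the paper tacitly does, and the $2^{|V|^{O(1)}}$ clause is unaffected.

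Your other remedies do not work:
\begin{itemize}
\item A partial $\tr$ is not a \TMB instance.
\item For $\MW$, deleting the $\tau$-valued pairs breaks the YES side. When $x_i$ is false and occurs positively in $C_j$, the auxiliary vertex between $v_{x_i,out}$ and $v_{c_j}$ is reachable only by a $\tau$-valued edge at the end of a path. This is exactly why those vertices were added.
\item Enlarging $\tau$ only increases $\FT_{\max}\geq 2\tau-1$, while the NO-value stays $a+4$.
\end{itemize}
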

\begin{proof}
The statements follow from a standard gap reduction argument. Suppose, by contradiction, that there exists an approximation algorithm for $\D$-\TMB with an approximation ratio $\beta$ smaller than that in the statement. Then, such an algorithm could be used to distinguish satisfiable SAT instances from unsatisfiable ones in the previous reduction, which is unlikely, unless \textsc{P} $=$ \textsc{NP}. 

To see this, note that in our reduction for $\FT$ ($\MW$, resp.), YES-instances have value at most $4$ ($a$, resp.), so the algorithm would return a solution of value at most $4\beta$ ($a\beta$, resp.), whereas NO-instances have value at least $a+4$ ($a(b+1)$, resp.).
If $\beta < \frac{a+4}{4}$ ($\beta < \frac{a(b+1)}{a}$, resp.), then satisfiable instances correspond to solutions of value smaller than $a+4$ ($a(b+1)$, resp.), while unsatisfiable ones correspond to solutions with value at least $a+4$ ($a(b+1)$, resp.). Then, the approximation algorithm could be used to distinguish satisfiable instances from unsatisfiable ones, contradicting the hardness of SAT.

The first part of the statement follows by observing that in the $\FT$ instance of the reduction $\FT_{\min} = 4$ and $\FT_{\max}=a+4$ ($\MW_{\min} = a$ and $\MW_{\max}=a(b+1)$). The second part of the statement follows by observing that $a$ ($b$, respectively) can be set to $2^{|V|^{O(1)}}$, keeping the size of the $\FT$ ($\MW$, resp.) instance polynomial in the size of the original SAT instance.
\end{proof}

Similarly, we can prove a constant lower bound on the approximation of $\D$-\TMB, when $\D \in \{\ST,\MH\}$.
\begin{corollary}\label{cor:ssnoapproxstmh}
For $\D \in \{\ST,\MH\}$, the $\D$-\TMB problem with a single source admits no approximation algorithm with a ratio better than $2$, unless $\textsc{P}=\NP$.
\end{corollary}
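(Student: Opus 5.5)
The argument is the same gap argument as in Corollary~\ref{cor:ssnoapprox}, now applied to the $\ST$ and $\MH$ reductions of Theorem~\ref{thm:sshard}.

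The key fact from the proof of Theorem~\ref{thm:sshard} is the gap between YES and NO instances:
\begin{itemize}
\item for $\D=\ST$, a satisfiable formula gives an optimum of $a+3$, and an unsatisfiable one gives an optimum of at least $2a+2$;
\item for $\D=\MH$, the corresponding values are $a+3$ and at least $2a+1$.
\end{itemize}
In both cases the ratio between the NO-value and the YES-value tends to $2$ as $a$ grows.

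Now suppose, for contradiction, that some polynomial-time algorithm has approximation ratio $\beta<2$. Pick a constant integer $a$ large enough that $\beta(a+3)<2a+1$ (for $\MH$) and $\beta(a+3)<2a+2$ (for $\ST$). Such an $a$ exists because $(2a+1)/(a+3)\to 2$.

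This choice has to be compatible with the constraints the reduction places on $a$, namely $a\ge 2$ for $\ST$ and $a\ge 3$ for $\MH$. Since $a$ only needs to be large, those constraints are harmless. Because $a$ is a constant depending only on $\beta$, the constructed instance stays polynomial in the size of the SAT formula. This holds even for $\MH$, where the gadget contains paths of length about $a$.

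Run the algorithm on the reduction's instance.
\begin{itemize}
\item On a YES-instance it returns a labeling of value at most $\beta(a+3)$, which is strictly below the NO-threshold.
\item On a NO-instance every feasible labeling has value at least the threshold, so the returned value is too.
\end{itemize}
Comparing the returned value with the threshold therefore decides SAT in polynomial time, which is impossible unless $\textsc{P}=\NP$.

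The main obstacle is certifying the NO-side lower bounds $2a+2$ and $2a+1$, since the paper only sketches the $\ST$, $\MH$ and $\MW$ cases. I would verify them by mirroring Lemma~\ref{lem:lemma2}. Take a falsified clause $C_j$ and any variable gadget adjacent to $v_{c_j}$.

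For $\ST$: the wrong label on $\{v_{x_i,in},v_{x_i,out}\}$ forces the route to $v_{c_j}$ either through the edge $\{s,v_{x_i}\}$ of traversal time $a$ and then the edge of traversal $a$ toward the negative clauses, or through edges whose traversal time is $\tau$. Either way the traveling time is at least $2a+2$.

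For $\MH$: the wrong choice forces the route through the long chain of length about $a$ on the positive side and then the long path on the negative side. This gives at least $2a+1$ hops.

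Finally, I must check that no path entering $v_{c_j}$ through another gadget, or using $\tau$-traversal edges other than as a last step, beats these bounds. Edges with traversal time $\tau$ can only appear last and cost $\tau\ge 2a+2$ on their own. Other gadgets attached to $v_{c_j}$ correspond to literals that are also false, so the same case analysis applies to them.
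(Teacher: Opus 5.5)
Your gap argument is the paper's own: the same YES/NO values, and the same choice of a constant $a$ with $\beta(a+3)<2a+2$, resp.\ $\beta(a+3)<2a+1$. For $\ST$ it works, provided $\tau\ge 2a+2$, which you rightly make explicit. The problem is the $\MH$ half, and it lies on the YES side, not the NO side you flagged. The value $a+3$ for satisfiable formulas, which you import from the sketch of Theorem~\ref{thm:sshard} (the paper's proof also relies on it, unverified), fails for the depicted $\MH$ gadget when $a$ is large. You note that the gadget contains paths of length about $a$, but the objective is a maximum over \emph{all} vertices, including the internal vertices of those paths.

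Let the labeling set $x_i$ true, i.e.\ $\lambda(\{v_{x_i,in},v_{x_i,out}\})=\{a+1\}$. Every non-final edge of a temporal path must be used at its designated time, since otherwise its traversal time is $\tau$. So $v_{x_i,out}$ can be an interior vertex of a path only if it is entered from $v_{x_i,in}$ at time $a+1$. Entering it instead from the subdivision vertex of a positive-literal edge, or from the first vertex of another negative-literal chain, is impossible: those vertices are reached from their other side only at time $a+4$ or later. Moreover, reaching $v_{x_i,in}$ by time $a+1$ forces the entire $a$-edge chain $s,v_{x_{i,1}},\dots,v_{x_{i,a-1}},v_{x_i,in}$.

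Hence the $k$-th internal vertex of a negative-literal chain (designated times $a+3,\dots,2a+2$) has hop distance $a+1+k$ through $v_{x_i,out}$. From the $v_{c_j}$ end, only the two nearest internal vertices are reachable, because designated times decrease in that direction. For $k=a-3$ this gives hop distance $2a-2$. Every satisfying assignment of, e.g., $(x_1\lor x_2)\land(\overline{x_1}\lor\overline{x_2})$ sets true a variable that occurs negated. So YES-values grow like $2a-2$ while NO-values are $2a+1$, and letting $a$ grow drives the ratio to $1$, not to $2$.

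Thus the gadget, not your gap argument, needs repair. One fix is to join every internal vertex of every negative-literal chain to $s$ by a new edge whose traversal time is $\tau$ at every time step. Such an edge can only be the last edge of a temporal path, so it gives those vertices hop distance $1$ without opening any new route into a clause vertex. The YES-value then really is $a+3$, your NO-side bounds are unaffected, and your argument goes through as written.
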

\begin{proof}
As in Corollary~\ref{cor:ssnoapprox}, we note that for $\ST$ ($\MH$, resp.), YES-instances of SAT have value $a+3$ ($a+3$, resp.) while NO-instances have value $2a+2$ ($2a+1$, resp.). 
If there exists an approximation algorithm for $\ST$ ($\MH$) with ratio $\beta< 2$, then by setting $a>\frac{2\beta}{2-\beta}-1$ ($a>\frac{3\beta-1}{2-\beta}$, resp.), we can distinguish satisfiable instances from unsatisfiable ones, contradicting the hardness of SAT.
\end{proof}

We conclude the section by observing that, for $\D \in \{\FT,\MW\}$, a simple algorithm guarantees an approximation factor that matches the lower bounds of Corollary~\ref{cor:ssnoapprox}.
\begin{theorem}\label{th:simplealgo}
    For $\D \in \{\FT,\MW\}$, the $\D$-\TMB problem with a single source admits an approximation algorithm with approximation ratios $\FT_{\max}/\FT_{\min}$ and $\MW_{\max}/\MW_{\min}$, respectively.
\end{theorem}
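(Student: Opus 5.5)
The plan is to show that any feasible solution has value between $\D_{\min}$ and $\D_{\max}$, and that a feasible solution can be computed in polynomial time. Together these give ratio $\D_{\max}/\D_{\min}$.

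\emph{Lower bound on $\mathrm{OPT}$.} Any labeling $\lambda$ that respects $\mu$ gives a temporal graph $(G,\lambda,\tr)$ whose temporal paths are also temporal paths of the full temporal graph $\G_{\tr}$. Hence for every vertex $v$ we have $\D(s,v,(G,\lambda,\tr))\geq \min\{\dur(P)\mid P\in\mathcal{P}_{\G_{\tr}}(s,v)\}$ for $\D=\FT$, and the analogous inequality with $\wait$ for $\MW$. Taking the maximum over $v$ shows that the optimal value is at least $\FT_{\min}$ (resp.\ $\MW_{\min}$).

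\emph{Computing a feasible solution.} Any TSOT rooted at $s$ inside $\G_{\tr}$ uses one label per edge. Since $\mu(e)\geq 1$, it is feasible for every multiplicity function. To obtain one, apply Theorem~\ref{thm:EApolytime}, i.e.\ Algorithm~2 of \cite{spanningtreesEA}, to $\G_{\tr}$. This yields a TSOT $\T_s$ in which $s$ temporally reaches every vertex. Existence of reachability is guaranteed by the standing assumption that for the single source some labeling lets $s$ reach all vertices, and that labeling lives inside $\G_{\tr}$. Equivalently, a temporal BFS from $s$ that keeps, for each newly reached vertex, the edge and label by which it was first reached gives such a tree.

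\emph{Upper bound on the returned solution.} In $\T_s$, every vertex $v$ is reached along some temporal path $P_v$, which is also a temporal path of $\G_{\tr}$. Therefore $\dur(P_v)\leq \max\{\dur(P)\mid P\in\mathcal{P}_{\G_{\tr}}(s,v)\}\leq \FT_{\max}$, and likewise $\wait(P_v)\leq \MW_{\max}$. So the value of $\T_s$ is at most $\FT_{\max}$ (resp.\ $\MW_{\max}$), and dividing by the lower bound gives the claimed ratio.

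The only delicate point is the degenerate case $\D_{\min}=0$, which can happen for $\MW$ (and for $\FT$ with zero traversal times). There the ratio is read as in Corollary~\ref{cor:ssnoapprox}, i.e.\ infinite unless $\D_{\max}=0$ as well, and the statement is vacuous or trivially exact. Otherwise the argument is routine. One also checks that the paths $P_v$ are genuine temporal paths: the tree's unique $s$–$v$ path with its labels is time-respecting by construction. It is not needed that $P_v$ realize $\D(s,v,\T_s)$, since $\D(s,v,\T_s)\le$ the value of $P_v$.
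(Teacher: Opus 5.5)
Your proposal is correct and follows the paper's proof. Both return an arbitrary TSOT rooted at $s$ in the full temporal graph $\G_{\tr}$ and bound its value between $\FT_{\min}\le\mathrm{OPT}$ (resp.\ $\MW_{\min}$) and $\FT_{\max}$ (resp.\ $\MW_{\max}$). Both bounds rest on the fact that every feasible labeling's temporal paths are temporal paths of $\G_{\tr}$.

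The differences are cosmetic. You build the tree with the EA algorithm of Theorem~\ref{thm:EApolytime} instead of the one from Theorem~\ref{thm:LDpolytime}, and either works (the paper says ``for example''). You also spell out the containment argument and the degenerate case $\FT_{\min}=0$ or $\MW_{\min}=0$, which the paper leaves implicit.
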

\begin{proof}
    The algorithm computes an arbitrary feasible solution to $\D$-\TMB, that is, a labeling $\lambda$ that temporally connects $s$ to every other node. It does so by using, for example, the algorithm in Theorem~\ref{thm:LDpolytime} to compute a TSOT rooted at $s$.
    By definition, the worst-case fastest-time and minimum-waiting time of this solution are at most $\FT_{\max}$ and $\MW_{\max}$, respectively. Moreover, the worst-case fastest-time and minimum-waiting time of any solution, including the optimal ones, are at least equal to $\FT_{\min}$ and $\MW_{\min}$, respectively. Therefore, the algorithm guarantees the claimed approximation ratios.
\end{proof}

\section{Multiple Sources}\label{sec:multipleSources}
Observe that in the relaxed variant where each vertex only needs to be reachable from at least one source, the multiple sources case reduces to the single source setting by adding a new source connected to all previous sources thought edges with zero traversal time at every time step in $[\tau]$; hence, the results in Section~\ref{sec:singlesource} apply. Thus, in this section we analyze the case in which every source must reach every vertex. 

In the previous section, we proved that EA-\TMB and LD-\TMB can be solved in polynomial time in the single source setting. These results extend to multiple sources in the following sense:

\begin{theorem}
    Let $\D\in\{\EA,\LD\}$. The $\D$-TMB problem on an instance $(G,S,\tr,\mu)$ with $|S| = k$ can be solved in polynomial time if, for all $e \in G$, we have $\mu(e) \geq k$.
\end{theorem}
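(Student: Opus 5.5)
The plan is to solve each source independently with the single-source algorithms and then superimpose the resulting trees. For every $s\in S$, we take the full temporal graph $\G_{\tr}$ of $G$ and $\tr$. For $\D=\EA$ we compute the TSOT $\T_s$ of Theorem~\ref{thm:EApolytime}, which realizes $\EA(s,v,\G_{\tr})$ for every $v$. For $\D=\LD$ we compute the TSOT $\T_s$ of Theorem~\ref{thm:LDpolytime}, which satisfies $\LD(s,v,\T_s)\ge \min_{w}\LD(s,w,\G_{\tr})$ for all $v$. Each $\T_s$ uses at most one label per edge. We then define the union labeling $\lambda(e):=\bigcup_{s\in S}\lambda_{\T_s}(e)$. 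Since there are $k$ trees, $|\lambda(e)|\le k\le\mu(e)$, so $\lambda$ respects the multiplicity constraints. The computation takes $k$ polynomial-time calls, so it is polynomial overall.

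Next I would argue feasibility and optimality. Adding temporal edges never destroys temporal paths. Hence every temporal path of $\T_s$ is still a temporal path of $(G,\lambda,\tr)$, and each $s$ reaches every vertex.

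For $\EA$, adding edges can only decrease earliest arrival times. So $\EA(s,v,(G,\lambda,\tr))\le \EA(s,v,\T_s)=\EA(s,v,\G_{\tr})$. The graph $(G,\lambda,\tr)$ is a temporal subgraph of $\G_{\tr}$, so the reverse inequality also holds. The same holds for any feasible labeling $\lambda'$: $\EA(s,v,(G,\lambda',\tr))\ge\EA(s,v,\G_{\tr})$. Therefore
$$\max_{s}\max_v \EA(s,v,(G,\lambda,\tr))=\max_s\max_v\EA(s,v,\G_{\tr})$$
is a lower bound on every solution and is attained, so $\lambda$ is optimal.

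For $\LD$ the argument is symmetric. Adding edges can only increase latest departure times, so $\LD(s,v,(G,\lambda,\tr))\ge \LD(s,v,\T_s)\ge\min_w\LD(s,w,\G_{\tr})$. Any feasible labeling is a subgraph of $\G_{\tr}$, so its value is at most $\min_s\min_w \LD(s,w,\G_{\tr})$. Our union labeling achieves at least this value, and hence exactly this value.

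The one point needing care is that the single-source guarantees are relative to the full temporal graph, which is a universal upper or lower bound for all labelings. The trees for different sources could share edges with different labels, but this is harmless: each edge carries at most $k$ labels in total, and that is exactly what $\mu(e)\ge k$ permits. I expect no real obstacle beyond checking the monotonicity of $\EA$ and $\LD$ under adding temporal edges, which follows directly from their definitions as a minimum or maximum over the set of temporal paths, a set that only grows.
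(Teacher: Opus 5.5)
Your proposal is correct and follows the paper's proof. You compute one TSOT per source on the full temporal graph $\G_{\tr}$ (the construction behind Theorem~\ref{thm:EApolytime} for $\EA$, Theorem~\ref{thm:LDpolytime} for $\LD$) and take the union of their labels, so each edge carries at most $k\le\mu(e)$ labels. Your explicit monotonicity argument, with $\G_{\tr}$ serving as a universal bound for every feasible labeling, spells out what the paper compresses into the remark that each source can be optimized independently.
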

\begin{proof}
    We prove the statement for $\D =\LD$. Let $\G_{\tr}=(G,\lambda',\tr)$ be the full temporal graph of $G$ and $\tr$.
    %
    By Theorem~\ref{thm:LDpolytime}, for each source $s\in S$ we compute a TSOT $\T_s=(G_s,\lambda_s,\tr)$, where $G_s$ is a spanning tree of $G$, and we have $\LD(s,v,\T_s)\geq \min_{w\in G} \LD (s,w,\G_{\tr})$, for all $v\in G$. Since, by construction, each edge $e\in G_s$ satisfies $|\lambda(e)|=1$, we can define a labeling $\lambda$ on $G$ by setting $\lambda(e)=\{t \mid t\in \lambda_s(e), s\in S,e\in G_s\}$. In other words, each edge of $G$ inherits the unique time-label assigned to it in any of the TSOT $\T_s$ it belongs to. Note that $|\lambda(e)|\leq k$. As we can independently maximize the minimum \textsc{LD} distance from each source to every vertex, it follows that the minimum \textsc{LD} path among all sources is also maximized.
    The proof for $\D = \EA$ is analogous, using Algorithm 1 in \cite{spanningtreesEA} instead of that in Theorem~\ref{thm:LDpolytime} used for $\LD$.
\end{proof}

On the other hand, if the multiplicity function is equal to one for every edge, then even deciding whether $\D$-TMB with two sources admits a feasible solution is \NP-complete, regardless of the objective function and for any measure $\D$. 
This implies that, without assuming the feasibility in advance, the problem is not approximable within any factor.
This resolves in the negative an open question posed in \cite{reachfast}, which asked whether a constant-factor approximation algorithm exists for \textsc{EA}-\RF.

\begin{theorem}\label{thm:noapx}
Let $\D \in \{\EA,\LD,\FT,\ST,\MH,\MW\}$. It is \NP-complete to decide whether $\D$-\TMB admits a feasible solution, even if $|S|=2$, $\mu(e)=1$, and $tr(e,t)=1$, for all $e\in E$, and $t\in [\tau]$.
\end{theorem}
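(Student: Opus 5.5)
Membership in \NP\ is immediate. A labeling $\lambda$ with $|\lambda(e)|\le 1$ has polynomial size, and for each of the two sources we can check temporal reachability of all vertices in polynomial time with a standard earliest-arrival sweep. Feasibility does not depend on $\D$, so a single reduction will settle every $\D\in\{\EA,\LD,\FT,\ST,\MH,\MW\}$ at once. For hardness I would reduce from 3-\textsc{SAT}. The instance uses sources $s_1,s_2$, multiplicity $\mu\equiv 1$ and traversal $\tr\equiv 1$, with $\tau$ chosen polynomial, for instance $\tau = 10$ or $O(n)$. Since the traversal function is uniform, all the structure has to come from the graph itself and from the fact that each edge carries a single label. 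That label must then serve both sources, and this is where the conflict comes from.

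The basic gadget is a path $s_1 - u - w - s_2$ on which the middle vertices must be reached by both sources. With one label per edge, a path segment can only be traversed in increasing-time order in one direction. So if an edge is used by $s_1$ to go ``rightward'' with increasing labels, $s_2$ cannot use the same segment ``leftward''. Each variable $x_i$ gets a cycle, or a pair of parallel paths, between two hub vertices $a_i$ and $b_i$. The upper path $P_i$ is labeled increasing from $a_i$ to $b_i$ and the lower path $\overline{P_i}$ increasing from $b_i$ to $a_i$, or the reverse, and the two orientations encode $x_i=1$ and $x_i=0$. I would also arrange things so that $s_1$ enters every variable gadget at $a_i$ and $s_2$ at $b_i$. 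Then each source reaches the far hub along exactly one of the two paths, and the interior vertices of each path are covered by both sources only via that orientation pattern. To cover the interior vertices I would attach dead-end pendant edges with a fresh label, which is possible because those edges are used by only one temporal direction.

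Clause vertices $c_j$ are attached through short pendant paths to internal vertices of $P_i$ or $\overline{P_i}$, namely those of the literals occurring in $C_j$. The intended property is the following. $c_j$ is reachable from both sources iff at least one of its literal paths is oriented ``correctly'', meaning that it is traversed by $s_1$ early enough, and then $s_2$ reaches that attachment point after $s_1$ has passed, continuing with a later label. The timing needs to be designed carefully. I would use a global time layering in which gadget $i$ is traversed within a window $[3i,3i+2]$ or similar. The connections to the sources go through backbone paths $s_1 \to a_1 \to a_2\to\cdots$ and $s_2\to b_p\to\cdots$ with monotone labels, so that the time at which each source arrives at every gadget is forced. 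The forward direction (satisfying assignment $\Rightarrow$ feasible labeling) is then a direct construction of labels.

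The main obstacle is the converse, showing that any feasible labeling induces a consistent assignment that satisfies every clause. Single-label edges and unit traversal leave very little room, but a labeling might still orient some paths in a mixed way, or let a source reach a clause vertex through an unintended route, for example through another gadget or via a backbone shortcut. I would first try to make the clause vertices leaves and every connection between gadgets go through cut vertices on the backbones. Then any path from $s_1$ or $s_2$ to $c_j$ has to pass through a specific literal-path vertex, and a counting argument on the labels along each cycle shows that exactly one of $P_i,\overline{P_i}$ can be monotone in each direction. If cycles make this too delicate, the fallback is to use trees of paths, where temporal direction is determined edge by edge, and to derive the assignment from the label order on each variable's central edge.
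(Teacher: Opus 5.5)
Your membership argument and your remark that feasibility does not depend on $\D$ are fine. The hardness part, however, has a genuine gap: the converse direction is the whole difficulty, and your plan rests on a mechanism the instance does not enforce. With $\mu\equiv 1$ and $\tr\equiv 1$ the instance carries no timing information, because the labels are the unknowns. Backbones with monotone labels, windows $[3i,3i+2]$, and the entry points $a_i$ for $s_1$ and $b_i$ for $s_2$ describe the labeling you have in mind. They are not properties of an arbitrary feasible labeling. A feasible labeling may instead let both sources reach a common vertex and then sweep the rest together. If both reach $a_i$ before any label of the cycle, then $P_i$ and $\overline{P_i}$ can both be increasing from $a_i$ to $b_i$, so your counting claim is false. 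Likewise, a clause vertex on a pendant path only needs both sources at the attachment vertex early enough, which has nothing to do with orientations.

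The intended pattern is not even realizable. Write $P_i=a_ip_1\cdots p_kb_i$ and $\overline{P_i}=b_iq_1\cdots q_ma_i$ with $k,m\ge 2$, and $\lambda(xy)$ for the label of $\{x,y\}$. Suppose $s_1$ reaches $b_i$ only through $P_i$, $s_2$ reaches $a_i$ only through $\overline{P_i}$, and internal vertices are entered only along the cycle (pendant edges add leaves, not routes). Then $s_1$ can enter $q_1$ only from $b_i$, and $s_2$ can enter $p_1$ only from $a_i$. This gives $\lambda(a_ip_1)<\lambda(p_kb_i)<\lambda(b_iq_1)<\lambda(q_ma_i)<\lambda(a_ip_1)$, a contradiction. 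Also, if $s_1$ and $s_2$ are the degree-one ends of the backbones, no labeling is ever feasible: each source would have to reach the other's unique neighbour before the other departs.

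What is missing is a handle on arbitrary feasible labelings. The paper gets it from degree constraints in the graph rather than from timing. It duplicates $\phi$ into $\phi^*$ and uses the non-separating-path construction of \cite{gobel}: variable chains through literal vertices, and clauses as subdivided stars on their three literal vertices. It adds a vertex $z$ joined by paths of length three to every non-clause vertex. Then $\phi^*$ is satisfiable iff some $s_1$--$s_2$ path $P$ leaves $G\setminus E(P)$ connected (Lemma~\ref{lemma:gobel}). The forward direction is then immediate: label $P$ with $1,\dots,t$ and flood $G\setminus E(P)$ from $s_2$ along a BFS tree with labels above $t$.

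For the converse, take the length-three path into $z$ that is traversed towards $z$ with the smallest last label. Its first internal vertex has degree two, so both sources must reach its base vertex $y_1^*$ beforehand without passing through $z$. Their two routes contain an $s_1$--$s_2$ path, which cuts off some clause gadget. A single-label subdivided star cannot then be covered by both sources from the sides they reach it. Duplicating the clauses handles the case where $y_1^*$ is adjacent to a literal vertex of the cut gadget. Your plan needs some such first-meeting argument, forced by the graph structure. As it stands, the proposal fixes neither a concrete graph nor a proof of either direction.
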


We reduce 3-SAT to $\D$-\TMB. Since our proof holds for any measure $\D$, we do not specify which particular one is used in the reduction. Our construction is similar in spirit to a step in the reduction used in \cite{gobel}, where it was proved that deciding whether a graph admits a labeling with at most one label per edge such that the resulting temporal graph is temporally connected is \NP-complete. However, several non-trivial modifications are required to adapt the reduction to our problem.

An instance of \textsc{3-SAT} consists of a Conjunctive Normal Form (CNF) formula $\phi$ with $\mu$ clauses $c_1, c_2, \dots, c_{\mu}$ over $p$ variables $x_1, x_2, \dots, x_{p}$. Each clause $c_i$ consists of exactly three literals $c_i=(l_1\lor l_2 \lor l_3)$, where a literal is either a variable or the negation of a variable. We assume that no clause contains both a variable and its negation. The goal is to decide whether there exists a truth assignment to the variables such that all clauses are satisfied under this assignment. \textsc{3-SAT} is one of the 21 classical NP-complete problems identified by Karp in 1972. 

Given an instance of 3-\textsc{SAT}, $\phi = \bigwedge_{i \in [q]} C_i$ over variables $\mathcal{X}=\{x_1, x_2, \dots, x_p\}$, we construct a new formula $\phi^*$ by duplicating each variable and each clause in $\phi$. Specifically, $\phi^*$ contains the variables $x_1,x_1',x_2,x_2',\dots,x_{p},x_{p}'$, where $x_i'$ is a copy of $x_i$. For every clause $C_i = (l_1 \lor l_2 \lor l_3)$, we include $C_i$ in $\phi^*$ along with a new clause $C_i'$, which consists of the duplicated variables corresponding to $l_1$, $l_2$, and $l_3$, preserving the same polarity (i.e., positive or negated). Observe that $\phi$ is satisfiable if and only if $\phi^*$ is satisfiable. We then construct an instance $(G,S,\tr,\mu)$ of $\D$-\TMB as follows. Refer to the Figure~\ref{fig:reduc_2sources} for an illustration of the construction.

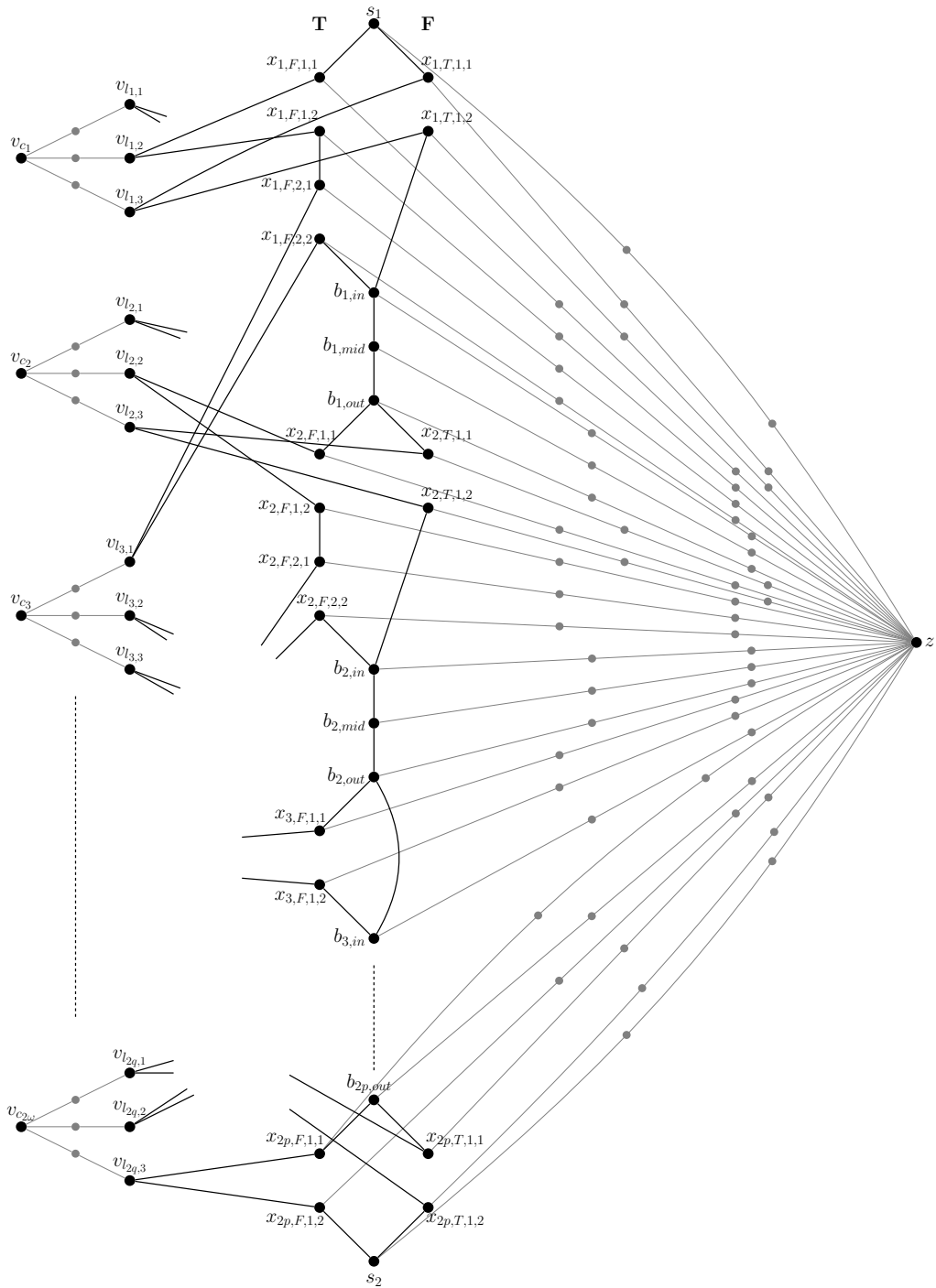
\begin{figure*}[htbp]
\centering
\resizebox{0.85\textwidth}{!}{
\begin{tikzpicture}[]
  \tikzstyle{every node}=[inner sep=4pt,circle,black,fill=black]
    \path (0,0) node (s1){};
    \path (-2,-2) node (x1t11){};
    \path (2, -2) node (x1f11){};
    \path (-2, -4) node (x1t12){};
    \path (-2, -6) node (x1t21){};
    \path (-2, -8) node (x1t22){};
    \path (2, -4) node (x1f12){};
    \path (0, -10) node (b1in){};
    \path (0, -12) node (b1mid){};
    \path (0, -14) node (b1out){};
    \path (-2, -16) node (x2t11){};
    \path (2, -16) node (x2f11){};
    \path (-2, -18) node (x2t12){};
    \path (-2, -20) node (x2t21){};
    \path (-2, -22) node (x2t22){};
    \path (2, -18) node (x2f12){};
    \path (0, -24) node (b2in){};
    \path (0, -26) node (b2mid){};
    \path (0, -28) node (b2out){};
    \path (-2, -30) node (x3t11){};
    \path (-2, -32) node (x3t12){};
    \path (0, -34) node (b3in){};
    \path (0, -40) node (b2nuout){};
    \path (-2, -42) node (x2nut11){};
    \path (-2, -44) node (x2nut12){};
    \path (2, -42) node (x2nuf11){};
    \path (2, -44) node (x2nuf12){};
    \path (0, -46) node (s2){};
    \path (-9, -3) node (l11){};
    \path (-9, -5) node (l12){};
    \path (-9, -7) node (l13){};
    \path (-13, -5) node (vc1){};
    \path (-9, -11) node (l21){};
    \path (-9, -13) node (l22){};
    \path (-9, -15) node (l23){};
    \path (-13, -13) node (vc1p){};
    \path (-9, -20) node (l31){};
    \path (-9, -22) node (l32){};
    \path (-9, -24) node (l33){};
    \path (-13, -22) node (vc2){};
    \path (20, -23) node (z){};
    \path (-9, -39) node (l2mu1){};
    \path (-9, -41) node (l2mu2){};
    \path (-9, -43) node (l2mu3){};
    \path (-13, -41) node (vc2mu){};

    \tikzstyle{every node}=[inner sep=3pt,circle,gray,fill=gray]
    \path (-11, -4) node (zl11){};
    \path (-11, -5) node (zl12){};
    \path (-11, -6) node (zl13){};
    \path (-11, -12) node (zl21){};
    \path (-11, -13) node (zl22){};
    \path (-11, -14) node (zl23){};
    \path (-11, -21) node (zl31){};
    \path (-11, -22) node (zl32){};
    \path (-11, -23) node (zl33){};
    \path (-11, -40) node (zl2mu1){};
    \path (-11, -41) node (zl2mu2){};
    \path (-11, -42) node (zl2mu3){};
    
    \draw[gray] (z)  to[bend right=10] node[pos=.3]{} node[pos=.6]{} (s1);
    \draw[gray] (z)-- node[pos=.3]{} node[pos=.6]{} (x1t11);
    \draw[gray] (z)-- node[pos=.3]{} node[pos=.6]{} (x1f11);
    \draw[gray] (z)-- node[pos=.3]{} node[pos=.6]{} (x1f12);
    \draw[gray] (z)-- node[pos=.3]{} node[pos=.6]{} (x1t12);
    \draw[gray] (z)-- node[pos=.3]{} node[pos=.6]{} (x1t21);
    \draw[gray] (z)-- node[pos=.3]{} node[pos=.6]{} (x1t22);
    \draw[gray] (z)-- node[pos=.3]{} node[pos=.6]{} (b1in);
    \draw[gray] (z)-- node[pos=.3]{} node[pos=.6]{} (b1mid);
    \draw[gray] (z)-- node[pos=.3]{} node[pos=.6]{} (b1out);
    \draw[gray] (z)-- node[pos=.3]{} node[pos=.6]{} (x2t11);
    \draw[gray] (z)-- node[pos=.3]{} node[pos=.6]{} (x2f11);
    \draw[gray] (z)-- node[pos=.3]{} node[pos=.6]{} (x2f12);
    \draw[gray] (z)-- node[pos=.3]{} node[pos=.6]{} (x2t12);
    \draw[gray] (z)-- node[pos=.3]{} node[pos=.6]{} (x2t21);
    \draw[gray] (z)-- node[pos=.3]{} node[pos=.6]{} (x2t22);
    \draw[gray] (z)-- node[pos=.3]{} node[pos=.6]{} (b2in);
    \draw[gray] (z)-- node[pos=.3]{} node[pos=.6]{} (b2mid);
    \draw[gray] (z)-- node[pos=.3]{} node[pos=.6]{} (b2out);
    \draw[gray] (z)-- node[pos=.3]{} node[pos=.6]{} (x3t11);
    \draw[gray] (z)-- node[pos=.3]{} node[pos=.6]{} (x3t12);
    \draw[gray] (z)-- node[pos=.3]{} node[pos=.6]{} (b3in);
    \draw[gray] (z)-- node[pos=.3]{} node[pos=.6]{} (b2nuout);
    \draw[gray] (z)to[bend right=10] node[pos=.3]{} node[pos=.6]{} (x2nut11);
    \draw[gray] (z)-- node[pos=.3]{} node[pos=.6]{} (x2nut12);
    \draw[gray] (z)-- node[pos=.3]{} node[pos=.6]{} (x2nuf11);
    \draw[gray] (z)to[bend left=5] node[pos=.3]{} node[pos=.6]{} (x2nuf12);
    \draw[gray] (z) to[bend left=10] node[pos=.3]{} node[pos=.6]{} (s2);
    
    \tikzstyle{every node}=[]
    \path (-2, 0) node (T){};
    \path (2, 0) node (F){};
    \path (-6.5, -39.75) node (e12){};
    \path (-6.75, -39.5) node (e11){};
    \path (-7.25, -39) node (e9){};
    \path (-7.25, -38.5) node (e10){};
    \path (-7.5, -3.5) node (e1){};
    \path (-7.75, -3.75) node (e2){};
    \path (-6.75, -11.5) node (e3){};
    \path (-7, -11.75) node (e4){};
    \path (-7.25, -22.75) node (e5){};
    \path (-7.5, -23) node (e6){};
    \path (-7, -24.75) node (e7){};
    \path (-7.25, -25) node (e8){};
    \path (-4.25, -23.25) node (e13){};
    \path (-3.75, -23.75) node (e14){};
    \path (-5, -30.25) node (e15){};
    \path (-5, -31.75) node (e16){};
    \path (-3.25, -39) node (e17){};
    \path (-3.25, -40.25) node (e18){};

    \draw[dashed, very thick] (-11, -25) to (-11, -37);
    \draw[dashed, very thick] (0, -35) to (0, -39);
    
    \path (s1) node[above=2pt]{\huge$s_1$};
    \path (s2) node[below=10pt]{\huge$s_2$};
    \path (z) node[right=5pt]{\huge$z$};
    \path (vc1) node[above=3pt]{\huge $v_{c_1}$};
    \path (vc1p) node[above=3pt]{\huge $v_{c_2}$};
    \path (vc2) node[above=3pt]{\huge $v_{c_3}$};
    \path (vc2mu) node[above=3pt]{\huge $v_{c_{2\omega}}$};
    \path (-2,0) node[]{\huge $\mathbf{T}$};
    \path (2,0) node[]{\huge $\mathbf{F}$};
    \path (l11) node[above=3pt]{\huge$v_{l_{1,1}}$};
    \path (l12) node[above=3pt]{\huge$v_{l_{1,2}}$};
    \path (l13) node[above=3pt]{\huge$v_{l_{1,3}}$};
    \path (l21) node[above=3pt]{\huge$v_{l_{2,1}}$};
    \path (l22) node[above=3pt]{\huge$v_{l_{2,2}}$};
    \path (l23) node[above=3pt]{\huge$v_{l_{2,3}}$};
    \path (l31) node[left=10pt,above=3pt]{\huge$v_{l_{3,1}}$};
    \path (l32) node[above=3pt]{\huge$v_{l_{3,2}}$};
    \path (l33) node[above=3pt]{\huge$v_{l_{3,3}}$};
    \path (l2mu1) node[above=3pt]{\huge$v_{l_{2q,1}}$};
    \path (l2mu2) node[above=6pt]{\huge$v_{l_{2q,2}}$};
    \path (l2mu3) node[above=3pt]{\huge$v_{l_{2q,3}}$};
    \path (b1in) node[left=5pt]{\huge$b_{1,in}$};
    \path (b1mid) node[left=5pt]{\huge$b_{1,mid}$};
    \path (b1out) node[left=5pt]{\huge$b_{1,out}$};
    \path (b2in) node[left=5pt]{\huge$b_{2,in}$};
    \path (b2mid) node[left=5pt]{\huge$b_{2,mid}$};
    \path (b2out) node[left=5pt]{\huge$b_{2,out}$};
    \path (b3in) node[left=5pt]{\huge$b_{3,in}$};
    \path (b2nuout) node[left=5pt,above=3pt]{\huge$b_{2p,out}$};
    \path (-3,-1.5) node[]{\huge$x_{1,F,1,1}$};
    \path (-3,-3.4) node[]{\huge$x_{1,F,1,2}$};
    \path (-3.2,-6) node[]{\huge$x_{1,F,2,1}$};
    \path (-3.2,-8) node[]{\huge$x_{1,F,2,2}$};
    \path (2.7,-1.5) node[]{\huge$x_{1,T,1,1}$};
    \path (2.7,-3.5) node[]{\huge$x_{1,T,1,2}$};
    \path (2.7,-15.3) node[]{\huge$x_{2,T,1,1}$};
    \path (2.7,-17.5) node[]{\huge$x_{2,T,1,2}$};
    \path (-2.3,-15.3) node[]{\huge$x_{2,F,1,1}$};
    \path (-3.3,-18) node[]{\huge$x_{2,F,1,2}$};
    \path (-3.3,-20) node[]{\huge$x_{2,F,2,1}$};
    \path (-1.9,-21.5) node[]{\huge$x_{2,F,2,2}$};
    \path (-2.7,-29.5) node[]{\huge$x_{3,F,1,1}$};
    \path (-2.7,-32.5) node[]{\huge$x_{3,F,1,2}$};
    
    \path (-2.9,-41.5) node[]{\huge$x_{2p,F,1,1}$};
    \path (-2.9,-44.5) node[]{\huge$x_{2p,F,1,2}$};
    \path (3,-41.5) node[]{\huge$x_{2p,T,1,1}$};
    \path (3,-44.5) node[]{\huge$x_{2p,T,1,2}$};

    \draw[very thick] (s1)-- (x1t11);
    \draw[very thick] (s1)-- (x1f11);
    \draw[very thick] (x1t11)-- (l12);
    \draw[very thick] (x1t12)-- (l12);
    \draw[very thick] (x1t12)-- (x1t21);
    \draw[very thick] (x1t21)-- (l31);
    \draw[very thick] (x1t22)-- (l31);
    \draw[very thick] (x1t22)-- (b1in);
    \draw[very thick] (x1f11) to[bend right=5] (l13);
    \draw[very thick] (x1f12)-- (l13);
    \draw[very thick] (x1f12)-- (b1in);
    \draw[very thick] (b1in)--(b1mid);
    \draw[very thick] (b1mid)--(b1out);
    \draw[very thick] (b1out)--(x2t11);
    \draw[very thick] (b1out)--(x2f11);
    \draw[very thick] (x2t11)-- (l22);
    \draw[very thick] (x2t12)-- (l22);
    \draw[very thick] (x2t12)-- (x2t21);
    \draw[very thick] (x2t21)-- (e13);
    \draw[very thick] (x2t22)-- (e14);
    \draw[very thick] (x2t22)-- (b2in);
    \draw[very thick] (x2f11)-- (l23);
    \draw[very thick] (x2f12)-- (l23);
    \draw[very thick] (x2f12)-- (b2in);
    \draw[very thick] (b2in)-- (b2mid);
    \draw[very thick] (b2mid)-- (b2out);
    \draw[very thick] (b2out)-- (x3t11);
    \draw[very thick] (b2out) to[bend left] (b3in);
    \draw[very thick] (x3t11)-- (e15);
    \draw[very thick] (x3t12)-- (e16);
    \draw[very thick] (x3t12)-- (b3in);
    \draw[very thick] (b2nuout)-- (x2nut11);
    \draw[very thick] (b2nuout)-- (x2nuf11);
    \draw[very thick] (x2nut11)-- (l2mu3);
    \draw[very thick] (x2nut12)-- (l2mu3);
    \draw[very thick] (x2nuf11)-- (e17);
    \draw[very thick] (x2nuf12)-- (e18);
    \draw[very thick] (x2nut12)-- (s2);
    \draw[very thick] (x2nuf12)-- (s2);
    \draw[very thick] (l11)-- (e1);
    \draw[very thick] (l11)-- (e2);
    \draw[very thick] (l21)-- (e3);
    \draw[very thick] (l21)-- (e4);
    \draw[very thick] (l32)-- (e5);
    \draw[very thick] (l32)-- (e6);
    \draw[very thick] (l33)-- (e7);
    \draw[very thick] (l33)-- (e8);
    \draw[very thick] (l2mu1)-- (e9);
    \draw[very thick] (l2mu1)-- (e10);
    \draw[very thick] (l2mu2)-- (e11);
    \draw[very thick] (l2mu2)-- (e12);

    \draw[gray] (vc1)-- (zl11);
    \draw[gray] (zl11)-- (l11);
    \draw[gray] (vc1)-- (zl12);
    \draw[gray] (zl12)-- (l12);
    \draw[gray] (vc1)-- (zl13);
    \draw[gray] (zl13)-- (l13);
    
    \draw[gray] (vc1p)-- (zl21);
    \draw[gray] (zl21)-- (l21);
    \draw[gray] (vc1p)-- (zl22);
    \draw[gray] (zl22)-- (l22);
    \draw[gray] (vc1p)-- (zl23);
    \draw[gray] (zl23)-- (l23);
    
    \draw[gray] (vc2)-- (zl31);
    \draw[gray] (zl31)-- (l31);
    \draw[gray] (vc2)-- (zl32);
    \draw[gray] (zl32)-- (l32);
    \draw[gray] (vc2)-- (zl33);
    \draw[gray] (zl33)-- (l33);

    \draw[gray] (vc2mu)-- (zl2mu1);
    \draw[gray] (zl2mu1)-- (l2mu1);
    \draw[gray] (vc2mu)-- (zl2mu2);
    \draw[gray] (zl2mu2)-- (l2mu2);
    \draw[gray] (vc2mu)-- (zl2mu3);
    \draw[gray] (zl2mu3)-- (l2mu3);

\end{tikzpicture}
}
\caption{Illustration of the reduction used in the proof of Theorem~\ref{thm:noapx}. Gray vertices and edges correspond to those introduced by subdivisions.}
\label{fig:reduc_2sources}
\end{figure*}

The graph $G$ includes two designated vertices, $s_1$, and $s_2$, that will act as the sources in our construction.
For each clause $C_j = (l_{j,1} \lor l_{j,2} \lor l_{j,3})$, we add a clause vertex $v_{c_j}$, three literal vertices $v_{l_{j,k}}$ for $k \in [3]$, and three auxiliary vertices $z_{l_{j,k}}$, one for each literal. We then add the edges $\{v_{c_j}, z_{l_{j,k}}\}$ and $\{z_{l_{j,k}}, v_{l_{j,k}}\}$ for each $k \in [3]$. In other words, for each clause, we construct a subdivided star centered at $v_{c_j}$ with leaves $v_{l_{j,1}}, v_{l_{j,2}}, v_{l_{j,3}}$, where each edge is subdivided by one intermediate vertex $z_{l_{j,k}}$.

For each variable $x_i$, in the order $x_1,,x_2,\dots,x_{2p}$, we construct two set of vertices $V_{x_i,T}$ and $V_{x_i,F}$ as follows. We process the clauses in the order $C_1,C_2,\dots,C_{2q}$, and for each occurrence of $x_i$ in a clause, we add two vertices, distinguishing whether the occurrence is positive or negative.

Specifically, if the $r$-th occurrence of $x_i$ as a positive literal occurs in clause $C_j$ as the $k$-th literal ($k\in [3]$), we add two vertices labeled $x_{i,F,r,1}$ and $x_{i,F,r,2}$, along with edges $\{v_{l_{j,k}},\,x_{i,F,r,1}\}$ and $\{v_{l_{j,k}},\,x_{i,F,r,2}\}$. Moreover, if $r>1$, we add the edge $\{x_{i,F,r-1,2},\,x_{i,F,r,1}\}$. We denote by $V_{x_i,F}$ the set of vertices added through this procedure.

If  the $r$-th occurrence of $x_i$ as a negative literal (i.e., $\overline{x_i}$) occurs in clause $C_j$ as the $k$-th literal ($k\in [3]$), we add two vertices labeled $x_{i,T,r,1}$ and $x_{i,T,r,2}$, along with the edges $\{v_{l_{j,k}},\,x_{i,T,r,1}\}$ and $\{v_{l_{j,k}},\,x_{i,T,r,2}\}$. Moreover, if $r>1$, we add the edge $\{x_{i,T,r-1,2},\,x_{i,T,r,1}\}$. We denote by $V_{x_i,T}$ the set of vertices added by this procedure.

For each $i \in [2p-1]$, we add the vertices $b_{i,\mathit{in}}$, $b_{i,\mathit{mid}}$, and $b_{i,\mathit{out}}$, together with the edges $\{b_{i,\mathit{in}}, b_{i,\mathit{mid}}\}$ and $\{b_{i,\mathit{mid}}, b_{i,\mathit{out}}\}$. We denote by $B$ the set of all such vertices.

Then, we add the edges $\{s_1, x_{1,F,1,1}\}$ and $\{s_1, x_{1,T,1,1}\}$ if both variable vertices exist; if only one of them exists, we add its corresponding edge and, in addition, the edge $\{s_1,b_{1,\mathit{in}}\}$. Similarly, we add the edges $\{x_{2p,F,h,2},\, s_2\}$, and $\{x_{2p,T,h',2},\, s_2\}$ if both variable vertices exist, where $h$ and $h'$ are the largest integers such that the corresponding variable vertices exist; if only one exists, we add its corresponding edge and, additionally, the edge $\{b_{2p -1,\mathit{out}},\,s_2\}$.

For each $i \in [2p - 1]$, we add the edges $\{x_{i,F,h,2},\, b_{i,\mathit{in}}\}$, $\{x_{i,T,h',2},\, b_{i,\mathit{in}}\}$, $\{b_{i,\mathit{out}},\, x_{i+1,F,1,1}\}$, and $\{b_{i,\mathit{out}},\, x_{i+1,T,1,1}\}$.
If either $h = 0$ or $h' = 0$ for some index $i \in [2p - 1]$, we add the edge $\{b_{i-1,\mathit{out}},\, b_{i,\mathit{in}}\}$.

Finally, we add a vertex $z$ that is connected to every vertex not introduced as part of a clause (i.e. $\{s_1,s_2\}\cup B \cup \bigcup_{i\in [2p]}(V_{x_i,T}\cup V_{x_i,F})$) via a path of length three, where the two internal vertices of each such path are distinct and newly created.

The constructed instance of $\D$-\TMB consists of the graph $G$, the source set $S = \{s_1, s_2\}$, multiplicity function $\mu(e) = 1$ for all $e \in E$, and traversal time function $\tr(e, t) = 1$ for all $e \in E$ and all $t \in [\tau]$.

We begin the proof of the theorem by recalling a result from~\cite{gobel}. As anticipated, our reduction introduces additional vertices compared to theirs; nevertheless, their result still applies to the graph we constructed. For completeness, we provide the proof.

\begin{lemma}\label{lemma:gobel}
    The \textsc{3-SAT} instance $\phi^*$ is satisfiable iff the graph $G$ admits a path $P$ between $s_1$ and $s_2$ such that $G \setminus E(P)$ remains connected.
\end{lemma}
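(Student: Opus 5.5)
We prove the two directions separately, following the argument of \cite{gobel}. The only new ingredients are the vertex $z$ with its length-three paths and the subdivided clause stars, and we check that they do not break the correspondence.

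\emph{($\Rightarrow$)} Let $\alpha$ satisfy $\phi^*$. We build $P$ from $s_1$ to $s_2$ through the variable gadgets in the order $x_1,\dots,x_{2p}$, passing through the connectors $b_{i,in},b_{i,mid},b_{i,out}$. Inside the gadget of $x_i$ we traverse the chain opposite to the true literal. If $\alpha(x_i)=1$ we traverse $V_{x_i,T}$, the chain attached to the literal vertices of the \emph{negative} occurrences $\overline{x_i}$. Each step $x_{i,T,r,1}\to v_{l_{j,k}}\to x_{i,T,r,2}$ passes through a literal vertex, and the edges $\{x_{i,T,r-1,2},x_{i,T,r,1}\}$ link consecutive steps. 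If $\alpha(x_i)=0$ we traverse $V_{x_i,F}$ symmetrically. When one of the chains is empty we use the bypass edges to $b_{i,in}$ or to $s_1,s_2$. This is a simple path, since each literal vertex belongs to one occurrence only.

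Next we check that $G\setminus E(P)$ is connected. Every vertex of $\{s_1,s_2\}\cup B\cup\bigcup_i(V_{x_i,T}\cup V_{x_i,F})$ keeps its private length-three path to $z$, because these paths are disjoint from $P$; so all of them, together with the internal path vertices, lie in the component of $z$. The subdivided clause stars are also untouched by $P$, so each star stays internally connected. Each literal vertex $v_{l_{j,k}}$ on the untraversed chain keeps both its edges to $x_{i,\cdot,r,1},x_{i,\cdot,r,2}$, and these are connected to $z$. Since $\alpha$ satisfies $C_j$, the star of $C_j$ has at least one true literal. The literal vertex of a true literal hangs off the untraversed chain, which joins the whole star to $z$. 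Hence $G\setminus E(P)$ is connected.

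\emph{($\Leftarrow$)} Let $P$ be an $s_1$--$s_2$ path with $G\setminus E(P)$ connected. The first task is structural. The vertices $z$, the internal vertices of the $z$-paths, and the clause vertices $v_{c_j}$ and $z_{l_{j,k}}$ cannot lie on $P$. Indeed, the internal vertices of a $z$-path have degree two: if $P$ used such a path, it would either enter a degree-two vertex and have to use both of its edges, cutting that vertex off, or it would need to reach $z$ and leave again. In either case we exhibit a vertex isolated in $G\setminus E(P)$. The same degree-two argument applies to each $z_{l_{j,k}}$ on the clause side, so $P$ cannot cross a clause star. Therefore $P$ lives in the ``backbone'' consisting of the variable chains, the literal vertices and the connectors. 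By the series structure of the backbone, and since $b_{i,mid}$ has degree two in it, $P$ must pass through every connector and, in each gadget, traverse exactly one of $V_{x_i,T}$ or $V_{x_i,F}$ entirely.

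A literal vertex $v_{l_{j,k}}$ visited by $P$ loses both of its backbone edges. It then remains attached only through $z_{l_{j,k}}$ to $v_{c_j}$. We set $\alpha(x_i)=1$ iff $P$ traverses $V_{x_i,T}$. If some clause $C_j$ were falsified, then all three of its literal vertices would lie on $P$, and the component formed by $C_j$'s star would be separated from $z$. This contradicts connectivity, so $\alpha$ satisfies $\phi^*$.

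The main obstacle is the structural step in ($\Leftarrow$): ruling out that $P$ detours through $z$ or through a clause star, and showing that $P$ picks exactly one chain per variable. I would handle it by a careful degree argument on the degree-two subdivision vertices and on the connector $b_{i,mid}$. The duplicated variables and clauses are not needed for the lemma itself; presumably they serve later steps of the reduction.
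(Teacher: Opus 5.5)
Your proof is correct and follows essentially the same route as the paper. Both arguments exclude paths through subdivided edges via the degree-two vertices, identify the remaining $s_1$--$s_2$ paths with truth assignments (one chain per variable gadget), and use $z$ to show that $G\setminus E(P)$ is disconnected exactly when some clause star has all three literal vertices on $P$, i.e., when that clause is falsified. You go further than the paper by spelling out the structural step (degree-two subdivision vertices, the cut vertices $b_{i,\mathit{mid}}$ of the backbone) that it only asserts. One point to make explicit: when a chain is empty and $P$ uses the bypass edge, read ``$P$ traverses $V_{x_i,T}$'' as ``$P$ avoids $V_{x_i,F}$'', so that the assignment $\alpha$ is well defined.
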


\begin{proof}
    First, observe that any path from $s_1$ to $s_2$ that traverses a subdivided edge disconnects the internal vertices of that subdivided edge. Thus, we will ignore such paths in the following.

    There is a bijection between truth assignment for $\phi^*$ and paths from $s_1$ to $s_2$ in $G$. Let $\alpha$ be a truth assignment for $\phi^*$, and let $P$ be the corresponding path from $s_1$ to $s_2$ that traverses $V_{x_i,T}$ if $\alpha(x_i) = T$, and traverses $V_{x_i,F}$ if $\alpha(x_i) = F$.
    
    The only vertices that can be separated by such a path from $s_1$ to $s_2$ in $G$ are those introduced by a clause, due to the presence of the vertex $z$. Thus, given a path $P$, the graph $G \setminus E(P)$ is disconnected if and only if there exists an integer $j\in [2q]$ such that $P$ contains all three vertices $v_{l_{j,1}}$, $v_{l_{j,2}}$, and $v_{l_{j,3}}$.
    
    This occurs if and only if the three corresponding literals evaluate to false under $\alpha$, and hence the clause $C_j$ is unsatisfied. It follows that $P$ is non separating if and only if $\alpha$ is satisfiable.
\end{proof}

We are now ready to prove Theorem~\ref{thm:noapx}, which we split into two claims.

\begin{claim}\label{claim1}
    If the 3-\textsc{SAT} instance $\phi$ is satisfiable, then the constructed instance of $\D$-\TMB admits a solution.
\end{claim}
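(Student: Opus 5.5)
Since $\phi$ is satisfiable, so is $\phi^*$. Lemma~\ref{lemma:gobel} then gives an $s_1$--$s_2$ path $P$ in $G$ such that $G\setminus E(P)$ is connected; we take $P$ to be the path corresponding to a satisfying assignment $\alpha$ of $\phi^*$. The plan is to build, with a single label per edge, a labeling in which both $s_1$ and $s_2$ temporally reach every vertex. We follow the idea of \cite{gobel}: the path $P$ acts as a fast bidirectional "backbone", and the remaining edges are labeled so that information spreads outward from the backbone.

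\textbf{Step 1: label the backbone.} Let $P=(u_0=s_1,u_1,\dots,u_L=s_2)$. Assign to the edge $\{u_{i-1},u_i\}$ the label $L+i$. Then $s_1$ traverses $P$ with labels $L+1,\dots,2L$ and reaches every vertex of $P$ by time $2L+1$. The problem is $s_2$: with one label per edge, a single label set cannot be increasing in both directions. To fix this, $s_2$ first uses the rest of the graph. Since $T:=G\setminus E(P)$ is connected, choose a spanning tree $T'$ of $T$ rooted at $s_2$. Give each tree edge at depth $d$ (measured from $s_2$) the label $d$. With all traversal times equal to $1$, the tree edges used by $s_2$ then have strictly increasing labels along every root-to-leaf path, so $s_2$ reaches every vertex of $G$ by time $n$. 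In particular it reaches every vertex of $P$, even though $P$ is not a tree edge. Take the time span $\tau$ large, say $\tau\ge 4n$, so that there is room for the label shifts used below.

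\textbf{Step 2: let $s_1$ reach everything.} This is the main obstacle. A spanning tree labeled increasingly from $s_2$ is labeled \emph{decreasingly} towards $s_2$, so $s_1$ cannot simply reuse it. The first approach to try is the following. Give the backbone labels larger than every tree label, so that $s_1$ reaches all of $P$, including $s_2$, only late; this already fails for $s_1$ unless $s_1$ starts early. Instead, use the structure of the construction. Every non-clause vertex is joined to $z$ by a private path of length three, and every vertex of $P$ lies in $B\cup\{s_1,s_2\}\cup\bigcup V_{x_i,\cdot}$ or is a literal vertex adjacent to such vertices. Choose $T'$ so that it routes through $z$. Then:
\begin{itemize}
\item The $z$-paths attached to $s_1$ and $s_2$ get labels $1,2,3$ going from $s_1$ to $z$.
\item The $z$-path to $s_2$ gets labels $4,5,6$ going from $z$ to $s_2$.
\end{itemize}
Both sources now reach $z$ via a common time prefix. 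Concretely, $s_1$ reaches $z$ at time $4$. Next, $s_2$ must also reach $z$, and it does so via $P$ reversed: place $P$'s labels early, increasing from $s_2$ to $s_1$, i.e.\ $1,\dots,L$. Then $s_2$ reaches $s_1$ by time $L+1$. We then relabel the $s_1$--$z$ path as $L+1,L+2,L+3$, so that both $s_1$ and $s_2$ are at $z$ by time $L+4$.

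\textbf{Step 3: broadcast from $z$ and check connectivity.} From $z$, broadcast to all remaining vertices with a BFS-tree labeling of $G\setminus E(P)$ rooted at $z$, using labels greater than $L+4$ that increase with depth. The checks are then:
\begin{itemize}
\item $z$ reaches every clause gadget through $G\setminus E(P)$. This follows from Lemma~\ref{lemma:gobel}, since satisfiability means $P$ never uses all three literal vertices of a clause, so every clause gadget stays attached.
\item Every vertex has at most one label. The BFS tree must avoid the already-labeled $z$-paths of $s_1$ and $s_2$ and the edges of $P$; this works because each of those paths is private, and only $s_1,s_2$ need to reach $z$ through their own paths.
\end{itemize}
The delicate point is ensuring that no edge is needed both by the backbone and by the broadcast tree, which is exactly why $G\setminus E(P)$ has to be connected. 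Since any feasible labeling has finite $\D$-value for every $\D$, this gives a solution for every measure.
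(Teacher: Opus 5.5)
There is a genuine gap: the final labeling is never pinned down, and in its literal form it is infeasible. After your relabeling in Step 2 you keep the labels $4,5,6$ on the $z$--$s_2$ path (oriented from $z$ to $s_2$), and in Step 3 you explicitly exclude that path from the broadcast tree. Now take the interior vertex of that path adjacent to $z$. It can be entered only through the edge labelled $4$ from $z$, or through the edge labelled $5$ from the other interior vertex. Neither source is at $z$ before time $L+4>4$. The other interior vertex is reached only by $s_2$, at time $7$ via the label $6$, which is too late for the label $5$; $s_1$ cannot reach it at all, since it gets to $s_2$ only through the broadcast. So this vertex is reached by neither source. The first bullet (both $z$-paths labelled $1,2,3$ ``from $s_1$ to $z$'') is also not meaningful. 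If you drop the labels $4,5,6$ and let the broadcast tree cover $s_2$'s $z$-path too, the construction does work: $P$ labelled $1,\dots,L$ from $s_2$ to $s_1$, then the $s_1$--$z$ path, then a depth-increasing tree from $z$. But then it is just the paper's argument with $s_1$ and $s_2$ swapped and an unnecessary detour through $z$.

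The detour comes from a misdiagnosis. The obstruction you found in Step 1 is only an artifact of placing the tree labels \emph{before} the backbone labels. Since a source may wait, simply reverse the order. Give the edges of $P$ the labels $1,\dots,L$ in order from $s_1$ to $s_2$. Give each edge of a BFS spanning tree of the connected graph $G\setminus E(P)$, rooted at $s_2$, the label $L+1$ plus the depth of its upper endpoint. Then $s_1$ arrives at $s_2$ at time $L+1$, exactly when the broadcast from $s_2$ begins, and follows the very same tree that $s_2$ uses after waiting. So $s_1$ never has to traverse the tree \emph{towards} $s_2$. Each edge carries at most one label because $P$ and the tree are edge-disjoint, and the tree spans every vertex, including those of $P$. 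This is exactly the paper's proof.
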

\begin{proof}
    Let $P$ be the static path between $s_1$ and $s_2$ as in Lemma~\ref{lemma:gobel}. We orient $P$ from $s_1$ to $s_2$ by transforming it into a temporal path, assigning to each edge in $P$ a time label in $\lambda$: the first edge receives time label 1, the second label 2, and so on. Let $t$ denote the label of the last edge of $P$ after this assignment.

    Since $G \setminus E(P)$ is connected, we can compute a shortest-path tree $T$ rooted at $s_2$, which is also a spanning tree. We then assign time labels to the edges of $T$ in such a way that an edge receives label $t+1$ plus its distance from $s_2$ in $T$. With this labeling, it is easy to observe that $s_1$ reaches $s_2$ at time $t + 1$, and starting from time $t + 1$, $s_2$ reaches every vertex of $G$. We thus conclude that there exists a solution to $\D$-\TMB on instance $(G,\{s_1,s_2\},\tr,\omega)$.
\end{proof}

\begin{claim}\label{claim2}
    If the \textsc{3-SAT} instance $\phi$ is not satisfiable, then the constructed instance of $\D$-\TMB admits no solution.
\end{claim}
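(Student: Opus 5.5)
We argue by contraposition: assume a labeling $\lambda$ with $\mu(e)=1$ and $\tr\equiv 1$ lets both $s_1$ and $s_2$ temporally reach every vertex. From this we build a path $P$ between $s_1$ and $s_2$ such that $G\setminus E(P)$ is connected. Lemma~\ref{lemma:gobel} then gives satisfiability of $\phi^*$, hence of $\phi$. The objective $\D$ plays no role, since only feasibility matters.

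\textbf{Step 1 (use of single labels).} Every edge has exactly one label (if unlabeled it is useless; assign it an arbitrary label). Since $s_1$ reaches $s_2$, there is a temporal path $Q$ from $s_1$ to $s_2$, with increasing labels and traversal time~$1$. We may take $Q$ to be a foremost path to $s_2$. Its static edge set is a path $P$ between $s_1$ and $s_2$.

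\textbf{Step 2 (non-separation).} We claim that no vertex is cut off by $E(P)$. Suppose some component $K$ of $G\setminus E(P)$ contains neither $s_1$ nor $s_2$'s side. More precisely, $G\setminus E(P)$ is disconnected, and by the analysis in Lemma~\ref{lemma:gobel} (thanks to $z$ and the subdivisions) the only possible separated pieces are parts of a clause gadget. This happens when $P$ passes through all three literal vertices $v_{l_{j,1}},v_{l_{j,2}},v_{l_{j,3}}$, isolating $v_{c_j}$ together with its subdivision vertices $z_{l_{j,k}}$. Alternatively, $P$ uses a subdivided edge, which isolates its internal vertex.

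\textbf{Step 3 (time argument).} Let $u$ be such an isolated region, for instance the star around $v_{c_j}$. It is attached to the rest of $G$ only through the vertices $v_{l_{j,k}}$, which lie on $P$. A temporal path from $s_1$ into the star must enter via some $v_{l_{j,k}}$ at a time after $s_1$ reaches it. By the monotone labels along $Q$, $v_{l_{j,k}}$ is visited at time at most $\lambda$ of the next $P$-edge. Symmetrically, $s_2$ reaches $v_{l_{j,k}}$ only after the $P$-edges in the reverse direction. The key is to show that both sources cannot reach $v_{c_j}$, using single labels on the two-edge subdivided paths $v_{l_{j,k}}$–$z$–$v_{c_j}$. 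Let $t_1$ be the label on $\{v_{l_{j,k}},z\}$ and $t_2$ the label on $\{z,v_{c_j}\}$. Entering $v_{c_j}$ through the $k$-th branch requires arriving at $v_{l_{j,k}}$ before $t_1<t_2$. Since $v_{c_j}$ is a leaf-like hub with only these three entries, reaching $z_{l_{j,k'}}$ from $v_{c_j}$ needs $t_2'$ after arrival at $v_{c_j}$.

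I expect this step to be the main obstacle. One must show that the time $s_2$ reaches $v_{l_{j,k}}$ is necessarily later than the label of the $P$-edge into it, whereas $s_1$ arrives early. The three branches must then be ordered so that one of the sources misses some vertex $z_{l_{j,k}}$ or $v_{c_j}$. I would first try choosing $Q$ as the foremost path and exploiting that $s_2$ must reach $s_1$'s side vertices, including $s_1$ itself via $z$-paths. This forces the labels along $P$, read in the direction from $s_2$, to be decreasing. So $s_2$ cannot use $P$ edges, and $s_2$ reaches every $v_{l_{j,k}}$ only through its clause star. That star is a tree whose interior must then be traversed both in and out monotonically, which is impossible with one label per edge. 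A similar argument handles a subdivided edge used by $P$.

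Concluding, $G\setminus E(P)$ is connected, and Lemma~\ref{lemma:gobel} yields the contradiction.
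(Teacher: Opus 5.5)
There is a genuine gap, and it is exactly the step you flag as the obstacle. Steps~2--3 aim to show that the support $P$ of a foremost $s_1\to s_2$ temporal path is non-separating. That is not a consequence of feasibility, and your argument for it never uses unsatisfiability.

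Here is a feasible labeling of a satisfiable instance where it fails:
\begin{itemize}
\item Take a non-separating path $P_0$ from Lemma~\ref{lemma:gobel}. Split it at $w=b_{1,\mathit{mid}}$ into $R_1$ (from $s_1$) and $R_2$ (from $s_2$).
\item Label $R_1$ and $R_2$ by $1,2,\dots$ increasing towards $w$, and let $t$ be the largest of these labels.
\item Label a BFS tree of $G\setminus E(P_0)$ rooted at $w$ by $t+1+\mathrm{depth}$ of the parent endpoint, and give every other edge a huge label.
\end{itemize}
Both sources reach $w$ by time $t+1$ and then every vertex, so the labeling is feasible. However, $w$'s only edge outside $P_0$ starts its subdivided path to $z$. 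Hence every foremost $s_1\to s_2$ path runs along $R_1$, up $w$'s subdivided path to $z$, and down $s_2$'s subdivided path. Its support passes through $z$ and isolates the subdivision vertices.

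The inference that breaks is ``so $s_2$ cannot use $P$ edges.'' $s_2$ can traverse $P$-edges in the $s_1\to s_2$ direction after reaching a vertex of $P$ early by another route, through $z$ or through another clause star. It can even cross a single $P$-edge backwards if it is at the far endpoint by that edge's label. Your conclusion that $s_2$ reaches the literal vertices only through the star proves too much: it would make the star unreachable for $s_2$ under any labeling.

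For the same reason, both sources may enter a cut-off clause star through the \emph{same} literal vertex early enough, and then the star is easily labeled outward from it. Your sketch does not rule this out, and you never use the duplication of clauses in $\phi^*$, which is what the paper needs for exactly this situation.

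The missing idea is controlling the hub $z$ by time. The paper considers the subdivided paths $y_1,y_2,y_3,z$ traversed with $t_1<t_2<t_3$ and picks the one with minimal $t_3^*$. Then no temporal path passes through $z$ before $t_3^*$, and $y_2^*$ can only be entered from $y_1^*$ at time $t_1^*<t_3^*$. So both sources must reach the common vertex $y_1^*$ by $z$-free paths: $Q$ from $s_1$ and $Q'$ from $s_2$.

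Lemma~\ref{lemma:gobel} is then applied to $Q'$ when $y_1^*=s_1$ (symmetrically for $s_2$), or to the $s_1$--$s_2$ path formed by $E(Q)\cup E(Q')$. A case analysis on whether $y_1^*$ is a source, lies in $B$, or is a variable vertex shows that the separated clause star cannot be covered by both sources with one label per edge. The duplicated clause handles the case where $y_1^*$ is adjacent to a literal vertex of the separated star.

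To repair your proof you need such a common meeting point that both sources provably reach before anyone can use $z$, not an arbitrary foremost $s_1\to s_2$ path.
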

\begin{proof}
    Assume, by contradiction, that there exists a $\lambda$ such that, in the temporal graph $(G, \lambda,\tr)$, both $s_1$ and $s_2$ can reach every vertex by some time $t$.
    
    Among all the subdivided edges incident to $z$, each consisting of a triple $(\{y_1, y_2\}, t_1)$, $(\{y_2, y_3\}, t_2)$, $(\{y_3, z\}, t_3)$, consider those satisfying $t_1 < t_2 < t_3$, where $y_1$ is either in some variable gadget $V_{x_k,T}$ or $V_{x_k,F}$, in the set $B$, or equal to $s_1$ or $s_2$.
    Such a triple must exist, since both $s_1$ and $s_2$ must be able to reach $z$. Among these, pick the triple $(\{y^*_1, y^*_2\}, t_1^*)$, $(\{y^*_2, y^*_3\}, t_2^*)$, $(\{y^*_3, z\}, t_3^*)$ for which $t_3^*$ is minimal. 
    
    Since both $s_1$ and $s_2$ must reach $y^*_2$, this is only possible if there exist two temporal paths, $Q$ from $s_1$ to $y_1^*$ and $Q'$ from $s_2$ to $y_1^*$, such that both arrive at $y_1^*$ before time $t_1^*$. Note that neither $Q$ nor $Q'$ can traverse a subdivided edge incident to $z$, as this would violate the minimality of the selected triple. We now distinguish cases based on the identity of the vertex $y_1^*$.

    \textbf{Case 1:} $y_1^*=s_1$ or $y_1^*=s_2$. Assume $y_1^*=s_1$; the other case is symmetric. In this case, by Lemma~\ref{lemma:gobel} $G\setminus E(Q')$ disconnects the literal vertices corresponding to one of the clause gadgets excluding the first one (recall that the formula has been duplicated). By the orientation induced by $Q'$ on the edges $\{b_{1,out},b_{1,mid}\}$ and $\{b_{1,mid},b_{1,in}\}$, and by the minimality of $t_3^*$, we conclude that $s_1$ cannot temporally reach the disconnected gadget, leading to a contradiction.

    \textbf{Case 2:} Assume $y_1^*\in B$. In this case, observe that $E(Q)\cup E(Q')$ forms a path from $s_1$ to $s_2$, and thus again, by Lemma~\ref{lemma:gobel} disconnects at least two clause gadgets corresponding to $C_j$ and $C_{j'}$. If one of the edge cuts is entirely contained in either $E(Q)$ or $E(Q')$ then, as before, $s_2$ or $s_1$, respectively, cannot reach the corresponding literal vertices. Thus, we may assume that both $s_1$ and $s_2$ reach at least one of the three literal vertices. Moreover, observe that the two edges incident to one of the three literal vertices are either both in $Q$ or both in $Q'$.
    
    However, we cannot label the edges in the clause gadget in such a way that both $s_1$ and $s_2$ reach all the vertices. To see this, assume that $s_1$ reaches only one literal vertex, say $v_{l_{j,1}}$, while $s_2$ reaches the other two literal vertices of a disconnected clause gadget (the other case is symmetric). Consequently, we have that $\lambda(\{ v_{l_{j,1}},z_{l_{j,1}}\}) < \lambda(\{ z_{l_{j,1}},v_{c_j}\})$, $\lambda(\{ z_{l_{j,1}},v_{c_j}\})<\lambda(\{ v_{c_j}, z_{l_{j,2}}\})$, and $\lambda(\{ z_{l_{j,1}},v_{c_j}\})<\lambda(\{ v_{c_j}, z_{l_{j,3}}\})$ as this assignment is the only one that allows $s_1$ to reach $v_{c_j}$, $v_{l_{j,2}}$, and $v_{l_{j,3}}$. However, in this case $s_2$ cannot temporally reach vertex $z_{l_{j,1}}$, yielding a contradiction.

    \textbf{Case 3:} Assume $y_1^*\in V_{x_k,T} \cup V_{x_k,F}$. The proof in this case is essentially the same as in Case 2; however, there is a small subtlety to consider. If the disconnected clause contains a literal vertex that has an edge with $y_1^*$, then both $s_1$ and $s_2$ might be able to jointly reach that literal vertex and thus all the vertices in the disconnected clause gadget. However, since we have duplicated the clauses, the other copy of the clause gadget cannot be in the same situation.

    As every case leads to a contradiction, this concludes the proof of the claim.
\end{proof}

The theorem follows from the proofs of the two preceding claims.
We note that this proof can be extended to any number of sources by adding at least four new vertices. If fewer than four sources are to be added, the same construction can be used, with some of the new vertices not acting as sources.

We add new vertices $s_3, s_4, \dots, s_{\nu}$. In addition, we insert the edges $\{s_2, s_3\}, \{s_2, s_4\}, \{s_3, s_4\}$, as well as edges of the form $\{s_j, s_3\}, \{s_j, s_4\}$ for $j = 5, 6, \dots, \nu$. Finally, we add the edge $\{s_{\nu-1}, s_{\nu}\}$.

Now, if $\phi$ is satisfiable, we can extend the labeling $\lambda$ used in Claim~\ref{claim1} with the one shown in Figure~\ref{fig:extThm7}. We only need to ensure that $\nu+1$ is the minimum label incident to $s_2$ and that $\omega$ is the maximum label incident to $s_2$. In the worst case, this can be achieved by postponing some of the labels to make $\nu+1$ the minimum label incident to $s_2$, and by adding at most $\nu$ time slots after the maximum label incident to $s_2$. Moreover, note that in the labeling shown in Figure~\ref{fig:extThm7}, any $s_i$ can reach any $s_{i'}$, with $i, i' \in {2,3,\dots,\nu}$.
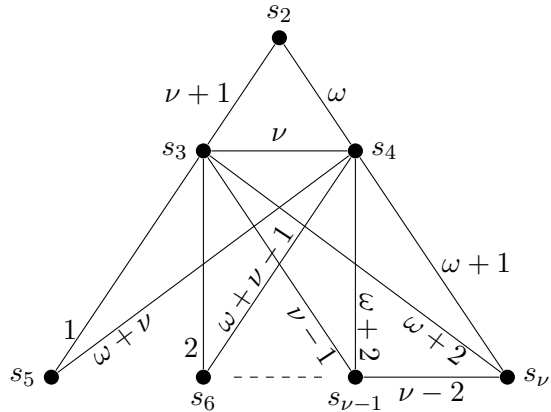
\begin{figure}[h]
\begin{center}
\begin{tikzpicture}[]
    \tikzstyle{every node}=[inner sep=2pt,circle,black,fill=black]
    \path (0,0.5) node (s2){};
    \path (-1,-1) node (s3){};
    \path (1,-1) node (s4){}; 
    \path (-3,-4) node (s5){};
    \path (-1,-4) node (s6){};
    \path (1,-4) node (s7){};
    \path (3,-4) node (s8){};
    \tikzstyle{every node}=[]
    \path (s2) node[above=2pt]{$s_2$};
    \path (s3) node[left=2pt]{$s_3$};
    \path (s4) node[right=2pt]{$s_4$};
    \path (s5) node[left=2pt]{$s_5$};
    \path (s6) node[below=2pt]{$s_6$};
    \path (s7) node[below=2pt]{$s_{\nu-1}$};
    \path (s8) node[right=2pt]{$s_{\nu}$};
    \draw (s2)-- node[left,pos=.5]{$\nu+1$} (s3);
    \draw (s2)-- node[right,pos=.5]{$\omega$} (s4);
    \draw (s3)-- node[above,pos=.5]{$\nu$} (s4);
    \draw (s3)-- node[above,pos=.9]{$1$} (s5);
    \draw (s3)-- node[left=-2pt,pos=.9]{$2$} (s6);
    \draw (s3)-- node[below=-3pt,pos=.8,sloped]{$\nu-1$} (s7);
    \draw (s3)-- node[below=-3pt,pos=.8,sloped]{$\omega+2$} (s8);
    \draw (s4)-- node[below=-3pt,pos=.8,sloped]{$\omega+\nu$} (s5);
    \draw (s4)-- node[above=-3pt,pos=.6,sloped]{$\omega+\nu-1$} (s6);
    \draw (s4)-- node[above=-3pt,pos=.8,sloped]{$\omega+2$} (s7);
    \draw (s4)-- node[right,pos=.5]{$\omega+1$} (s8);
    \draw (s7)-- node[below=-2pt,pos=.5]{$\nu-2$} (s8);
    \draw[dashed] (-.6,-4) -- (0.6,-4);
\end{tikzpicture}
\caption{Extension of the reduction in Theorem~\ref{thm:noapx} to more than two sources}
\label{fig:extThm7}
\end{center}
\end{figure}

Since all newly introduced sources can only reach vertices and be reached via $s_2$, if $\phi$ is unsatisfiable, then, by Claim~\ref{claim2}, either $s_1$ or $s_2$ cannot reach all vertices. Consequently, Claim~\ref{claim2} also holds for this extended graph.

\section{Trees}\label{sec:trees}
In this section, we present a polynomial-time algorithm for both \textsc{EA}-\TMB and \textsc{LD}-\TMB on trees with an arbitrary number of sources, under the assumption that the multiplicity function is at least two for every edge of the input graph. In fact, our algorithm requires a slightly weaker condition, that is, in each edge $e$ lying on a path between two sources, $\mu(e)\geq 2$. 
~\cite{reachfast} proposed a polynomial-time algorithm for \textsc{EA}-\TMB on trees using a different technique.
Our algorithm is given in Algorithm~\ref{alg:trees}.
\begin{algorithm}[t]
\caption{\{EA,LD\}-\TMB on Trees}\label{alg:trees}
\begin{algorithmic}
\Require $\D \in \{\EA,\LD\}$, a tree $T$, a set of sources $S\subseteq V(T)$, a traversal function $\tr$, and a multiplicity function $\mu$ such that $\mu(e)\geq 2$ for each edge $e\in T$.
\Ensure A  solution $\lambda$ to the $\D$-\TMB instance $(T,S,\tr,\mu)$.
\smallskip
\State For each $s \in S$, compute $\lambda_s$, the solution to $\D$-\TMB on instance $(T,\{s\},\tr,\mu)$ (see Section~\ref{sec:singlesource}).
\For{$e=\{u,v\} \in T$}
    \State $t_1:=0$; $t_2:=0$
    \For{$s\in S$}
        \If{$\D(s,u,(T,\lambda_s,\tr)) < \D(s,v,(T,\lambda_s,\tr))\}$}
            \State $t_1=\max(t_1,\lambda_s(\{u,v\}))$
        \Else
            \State $t_2=\max(t_2,\lambda_s(\{u,v\}))$
        \EndIf
    \EndFor
    \State $\lambda(e):=\{ t_i \mid i\in [2], t_i>0 \} $
\EndFor
\State \Return $\lambda$

\end{algorithmic}
\end{algorithm}

\begin{theorem}
    Let $\D\in \{\EA ,\LD \}$. Let $T$ be a tree, $S\subseteq V$, let $\tr$ be a traversal function, and let $\mu$ be such that $\mu(e)\geq 2$ for all $e\in E$. Algorithm~\ref{alg:trees}, on input $(T,S,\tr,\mu)$, correctly computes a solution to $\D$-\TMB on the instance $(T,S,\tr,\mu)$ in polynomial time.
\end{theorem}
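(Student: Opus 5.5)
The plan is to compare the merged labeling $\lambda$ with the single-source optima $\lambda_s$. In a tree, each $\lambda_s$ (a TSOT rooted at $s$, obtained via Theorem~\ref{thm:EApolytime} or Theorem~\ref{thm:LDpolytime}) orients every edge away from $s$. Hence for every $s$ and every edge $e=\{u,v\}$ exactly one of $\D(s,u,\cdot)<\D(s,v,\cdot)$ or its converse holds, according to which endpoint is closer to $s$ in $T$. The algorithm therefore partitions the sources into the set $S_{u\to v}$, on the $u$-side of $e$, and the set $S_{v\to u}$. For each direction it keeps a single label: the maximum of the labels $\lambda_s(e)$ over sources in that class. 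This uses at most two labels per edge, so $|\lambda(e)|\le 2\le\mu(e)$ and the solution respects the multiplicity constraints. The running time is clearly polynomial: $|S|$ single-source computations plus $O(|E||S|)$ comparisons.

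The main step is to show that, for every source $s$ and vertex $w$, the unique tree path from $s$ to $w$ is still temporally valid under $\lambda$, and that its value is no worse than $\max_{s'}\max_{v}\D(s',v,(T,\lambda_{s'},\tr))$ for $\EA$ (or no smaller than the corresponding $\min\min$ for $\LD$). Since single-source optima are lower/upper bounds for the multi-source optimum, this gives optimality.

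For $\EA$ I would argue as follows. Let $s$ have path $e_1,\dots,e_k$ to $w$, all oriented away from $s$. Every source $s'$ in the same direction class as $s$ on $e_i$ lies on the same side, so $s'$'s path to the far endpoint also passes through $e_{i-1}$ in the same direction. I would then argue inductively that $\max_{s'}\lambda_{s'}(e_i)$, taken over that class, is non-decreasing along the path. The key fact is that if $s'$ is in the class of $e_i$ (same orientation), then $s'$ is also in the class of $e_{i-1}$ whenever its path uses $e_{i-1}$. The maximizer at $e_i$ is some $s'$ that reached the tail of $e_i$ in time. But $s'$ may enter the path from a side branch, so I cannot just follow its labels. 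This is where I expect the real obstacle. I would try to \emph{normalise} each $\lambda_s$ first: replace each label by the latest feasible one that does not increase the maximum EA value. Concretely, set $\lambda_s(e)$ to $M-(\text{remaining tail length})$, where $M$ is the common bound $\max_{s'}\max_v \EA(s',v,\lambda_{s'})$ and the remaining tail length is the latest-departure slack computed backwards from the leaves. This is the "as late as possible" schedule toward each vertex within deadline $M$. It depends only on the subtree beyond $e$ in that direction, not on $s$. Hence all sources in the same direction class would share the same label, and the max is harmless.

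For $\LD$ the symmetric normalization is "as early as possible after departure $L$", where $L$ is the common lower bound. The labels again depend only on the part of the tree behind $e$. Since I am unsure whether the algorithm as written already implicitly yields this, the fallback is to show that taking the max per direction dominates the normalized schedule: each $s$ can still wait at vertices and use the larger label. I would verify this via the monotonicity argument above, arriving at each vertex no later than $M$.
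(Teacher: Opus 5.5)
\textbf{There is a genuine gap.} Your handling of the multiplicity bound and the running time is fine. So is reducing optimality to a comparison with the single-source optima. But the central step is never established.

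To show that a source $s$ can still follow its tree path $e_1,\dots,e_k$, with $e_i=\{w_{i-1},w_i\}$, under $\lambda$, you try to control the kept label on $e_i$ through the source that attains it. As you notice yourself, that source may enter at $w_{i-1}$ from a side branch, or be $w_{i-1}$ itself. So your statement that every source in the class of $e_i$ routes through $e_{i-1}$ is false.

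The missing idea is that the inclusion runs the other way. Let $C_i$ be the set of sources on the $w_{i-1}$-side of $e_i$, and let $t^*_i=\max_{s'\in C_i}\lambda_{s'}(e_i)$ be the label the algorithm keeps for this direction. Then $C_{i-1}\subseteq C_i$. Moreover, every $s'\in C_{i-1}$ reaches $w_i$ in its own tree $\lambda_{s'}$ via $e_{i-1}$ followed by $e_i$. Taking $s^*\in C_{i-1}$ attaining $t^*_{i-1}$ gives
\[
t^*_i\;\ge\;\lambda_{s^*}(e_i)\;\ge\;t^*_{i-1}+\tr(e_{i-1},t^*_{i-1}),
\]
so the kept forward labels $t^*_1,\dots,t^*_k$ themselves form a temporal path from $s$ (note $s\in C_1$). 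This is precisely the step in the paper's proof.

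Optimality then follows per vertex, without any global bound $M$:
\begin{itemize}
\item For $\EA$, the arrival $t^*_k+\tr(e_k,t^*_k)$ is the arrival time at $w_k$ of the source attaining $t^*_k$ in its own single-source solution. It is therefore at most that source's single-source optimum, and hence at most the multi-source optimum.
\item For $\LD$, $t^*_1=\lambda_{s^*}(e_1)$ for some $s^*\in C_1$. This is at least the departure time of $s^*$'s own path to $w_1$, hence at least $s^*$'s single-source optimum, which is at least the multi-source optimum.
\end{itemize}

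\textbf{Your workaround does not close this gap.} Normalising each $\lambda_s$ to an as-late-as-possible schedule replaces the labelings that Algorithm~\ref{alg:trees} actually merges. At best it shows that \emph{some} labeling with two labels per edge is optimal, not that the algorithm outputs one.

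The fallback also points the wrong way. The as-late-as-possible label on a directed edge is at least $\lambda_s(e)$ for every source in that class, hence at least the maximum $t^*$ the algorithm keeps. So the algorithm's labels are the earlier ones. Validity of the chain of later labels says nothing about the chain of earlier ones, since waiting only lets you move from an earlier label to a later one.

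A smaller point, shared with the paper: your claim that exactly one of $\D(s,u)<\D(s,v)$ or its converse holds is false. It fails for $\LD$, because in a single-label tree all vertices in one branch below $s$ have the same $\LD$ value from $s$. It also fails for $\EA$ when $\tr$ can be $0$. The test has to be read as ``$u$ is the endpoint closer to $s$ in $T$'', which is what both arguments actually use.
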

\begin{proof}
    We first show that the algorithm runs in polynomial time. For each source $s$ the labeling $\lambda_s$ can be computed in polynomial time (Theorem~\ref{thm:EApolytime} and Theorem~\ref{thm:LDpolytime}). Thereafter, the algorithm only iterates over the edges of $G$ and for each edge over the single-source solutions $\lambda_s$ for $s\in S$, which can be handled in polynomial time.

    We prove correctness in two steps. First, we show that the resulting labeling preserves the property that any source can reach every other vertex. Then, we prove that the objective is minimized for \textsc{EA} and maximized for \textsc{LD}, respectively.

    Let $s$ be a source vertex, let $v \in T$ be any other vertex, let $P_{s,v} = e_1, e_2, \dots, e_k$ denote the unique (static) path from $s$ to $v$ in $T$, where $e_i=\{w_{i-1},w_i\}$ with $w_0=s$ and $w_k=v$. By assumption, in $(T, \lambda_s, \tr)$ the edges $e_1, e_2, \dots, e_k$ can be traversed in increasing order. Moreover, note that for each $e_i$ there exists $t \in \lambda(e_i)$ such that $t \geq \lambda_s(e_i)$, due to the way in which we construct $\lambda$.

    By contradiction, assume that there is no temporal path from $s$ to $v$ in $(T, \lambda, \tr)$. We may assume that $s$ and $v$ are not adjacent; otherwise, $P_{s,v} = e_1$, and since $|\lambda(e_1)| \geq 1$, we obtain a contradiction.
    
    Let $e_i=\{w_{i-1},w_i\}$ be the first edge on $P_{s,v}$ that $s$ cannot traverse in $(T, \lambda, \tr)$. Observe that $e_i\not=e_1$ as $s$ can traverse $e_1$ with any label in $\lambda(e_1)$, and $|\lambda(e_1)|\geq 1$.

    Let $t_{i-1}^*$ be the label in $\lambda$ on the edge $e_{i-1}$ at the end of Algorithm~\ref{alg:trees}, added by a source $s^*$ such that $\D(s^*,w_{i-1},(T,\lambda_{s^*},\tr))<\D(s^*,w_{i},(T,\lambda_{s^*},\tr))$. In other words, $t_{i-1}^* = \max( \{\lambda_{s'}(e_{i-1}) \mid {s'}\in S, \D(s',w_{i-2},(T,\lambda_{s'},\tr)) < \D(s',w_{i-1},(T,\lambda_{s'},\tr))\} \in \lambda_{s^*}(e_{i-1})$. Note that such a source $s^*$ must exist, as $s$ satisfies all requirements. Moreover, $s$ can traverse $(e_{i-1},t_{i-1}^*)$ by assumption.
    Then, at the end of Algorithm~\ref{alg:trees},
    $$\max( \{\lambda_{s'}(e_{i}) \mid {s'}\in S, \D(s',w_{i-2},(T,\lambda_{s'},\tr)) < \D(s',w_{i-1},(T,\lambda_{s'},\tr))\})=t_i^*\in \lambda(e_{i})$$
    and in particular $t_i^*\geq \lambda_{s^*}(e_i)$, which implies that $t_{i-1}^*<t_i^*$ and thus $s$ can traverse $(e_i,t_i^*)$, a contradiction to the assumption that there is no temporal path from $s$ to $v$ in $(T, \lambda, \tr)$. This contradiction concludes the first part of the proof.

    As a first observation for the second part of the proof, we note that every label in $\lambda$ belongs to a single-source solution $\lambda_s$, which follows directly from Algorithm~\ref{alg:trees}.
    
    Let $\D=\EA$. Let $(e_1,t_1),(e_2,t_2),\dots,(e_k,t_k)$ be one of the maximum \textsc{EA} paths from a source $s\in S$ to a vertex $v\in T$ in $(T,\lambda,\tr)$, i.e., $\EA(s,v,(T,\lambda,\tr)) = \max_{s'\in S} \max_{v' \in V} \EA(s',v',(T,\lambda,\tr))$. From the previous observation it follows that $\{t_k\}= \lambda_{s^*}(e_k)$ for some $s^*\in S$. Since $\lambda_{s^*}$ is the minimal solution for $s^*$, it follows that there is no temporal path from $s^*$ to $v$ that arrives before $t_k+\tr(e_k,t_k)$, and the same holds considering all sources together. Therefore, all the maximum \textsc{EA} paths are minimized.

    Let $\D=\LD$. Let $(e_1,t_1),(e_2,t_2),\dots,(e_k,t_k)$ be one of the minimum \textsc{LD} paths from a source $s\in S$ to a vertex $v\in T$ in $(T,\lambda,\tr)$, i.e., $\LD(s,v,(T,\lambda,\tr)) = \min_{s'\in S} \min_{v' \in V} \LD(s',v',(T,\lambda,\tr))$. From the initial observation, we have that $\{t_1\}= \lambda_{s^*}(e_1)$ for some $s^*\in S$. Since $\lambda_{s^*}$ is the maximum solution for $s^*$, then there is no temporal path from $s^*$ to $v$ with departure time greater than $t_1$, and the same holds considering all sources together. This implies that all the minimum \textsc{LD} paths are maximized.
\end{proof}

\section{Future Research Directions}\label{sec:openproblemstmb}
Building on our results, it remains open whether, for $\D\in \{\ST,\MH\}$, the $\D$-\TMB problem with a single source can be approximated within a constant factor of at least 2. Moreover, it remains an open question to identify graph classes (beyond trees for $\EA$ and $\LD$) in which the $\D$-\TMB problem is polynomially solvable, such as planar graphs or graphs of bounded treewidth.

In our formulation of the $\D$-\TMB problems, we optimize the worst-case temporal distance from the sources to all vertices. A natural variant is to optimize the average temporal distance instead. Studying the complexity and approximability of this variant is an interesting research direction.

\bibliographystyle{plainnat}
\bibliography{biblio}
\end{document}